%
%
\documentclass[mnsc,sglanonrev]{informs4_clean}
\usepackage{eqndefns-left} 
\RequirePackage{tgtermes}
\RequirePackage{newtxtext}
\RequirePackage{newtxmath}
\RequirePackage{bm}
\RequirePackage{endnotes}

\OneAndAHalfSpacedXII 

\usepackage{algorithm}
\usepackage{algpseudocode}
\usepackage{tikz}
\usepackage{mathtools}
\usepackage{amsmath}
\usepackage{graphicx}
\usepackage{xcolor}
\usepackage{svg}
\usepackage{makecell}
\usepackage{booktabs}
\usepackage{multirow}
\usepackage{tabularx}
\usepackage{pdflscape} 
\usepackage{pdflscape}      
\usepackage{caption}       
\usepackage{adjustbox}

\usepackage{tikz}
\usetikzlibrary{shapes.geometric, arrows.meta, positioning}

\tikzstyle{startstop} = [rectangle, rounded corners, minimum width=4.5cm, minimum height=1cm,text centered, draw=black, fill=gray!10]
\tikzstyle{decision} = [
  diamond, aspect=2.2, draw=black, fill=blue!10,
  text centered, inner sep=1pt,
  minimum width=0.5\textwidth,
  text width=0.45\textwidth,
  align=center
]
\tikzstyle{process} = [rectangle, draw=black, fill=green!10, text centered, minimum width=4.5cm, minimum height=1cm]
\tikzstyle{arrow} = [thick, ->, >=stealth]

\newcommand{\pref}[1]{(\ref{#1})}
\newcommand{\fref}[1]{Figure \ref{#1}}
\newcommand{\tref}[1]{Table \ref{#1}}

\newcommand{\sref}[1]{Section \ref{#1}}
\newcommand{\ssref}[1]{Section \ref{#1}}

\newcommand{\proman}[1]{(\romannumeral #1)}

\newcommand{\edit}[1]{\textcolor{black}{#1}}

\usepackage[colorlinks=false]{hyperref}
\usepackage{cleveref}

\usepackage{natbib}
 \bibpunct[, ]{(}{)}{,}{a}{}{,}%
 %
 %
 %
 %
 %

\EquationsNumberedThrough    

\TheoremsNumberedThrough     
\ECRepeatTheorems  %

\MANUSCRIPTNO{MNSC-0001-2024.00}

\allowdisplaybreaks[4]
\begin{document}


\RUNAUTHOR{Anonymous Authors}

\RUNTITLE{Statistical Arbitrage in Options Markets}


\TITLE{Statistical Arbitrage in Options Markets by Graph Learning and Synthetic Long Positions}


\ARTICLEAUTHORS{%
\AUTHOR{Yoonsik Hong}
\AFF{Department of Industrial Engineering and Management Sciences,
Northwestern University, \EMAIL{YoonsikHong2028@u.northwestern.edu}}

\AUTHOR{Diego Klabjan}
\AFF{Department of Industrial Engineering and Management Sciences,
Northwestern University, \EMAIL{d-klabjan@northwestern.edu}}
%
%
} 

\ABSTRACT{%
Statistical arbitrages (StatArbs) driven by machine learning has garnered considerable attention in both academia and industry. 
Nevertheless, deep-learning (DL) approaches to directly exploit StatArbs in options markets remain largely unexplored. 
Moreover, prior graph learning (GL)—a methodological basis of this paper—studies overlooked that features are tabular in many cases and that tree-based methods outperform DL on numerous tabular datasets.  
To bridge these gaps, we propose a two-stage GL approach for direct identification and exploitation of StatArbs in options markets. 
In the first stage, we define a novel prediction target isolating pure arbitrages via synthetic bonds. 
To predict the target, we develop RNConv, a GL architecture incorporating a tree structure. 
In the second stage, we propose SLSA—a class of positions comprising pure arbitrage opportunities. 
It is provably of minimal risk and neutral to all Black-Scholes risk factors under the arbitrage-free assumption. 
We also present the SLSA projection converting predictions into SLSA positions. 
Our experiments on KOSPI 200 index options show that RNConv statistically significantly outperforms GL baselines, and that SLSA consistently yields positive returns, achieving an average P\&L-contract information ratio of 0.1627. 
Our approach offers a novel perspective on the prediction target and strategy for exploiting StatArbs in options markets through the lens of DL, in conjunction with a pioneering tree-based GL.
}

\KEYWORDS{Statistical Arbitrage, Graph Learning, Options Market, Deep Learning} 

\maketitle

\section{Introduction}
\label{sec:introduction}


We address the problem of directly identifying and exploiting statistical arbitrage (StatArb) opportunities in options markets by graph learning (GL). 
The concepts of StatArb strategies, though defined differently across studies, can be summarized as a class of trading strategies that generate a positive expected return with an acceptable risk of loss by leveraging price discrepancies \citep{lazzarino2018statistical}. 
StatArbs have attracted considerable attention in both academia and the financial industry, where they have been widely researched and applied \citep{zhan2022exploring}. 
To capture and exploit StatArbs in options markets, in this paper, we develop a GL method and trading strategy. 


Although various machine learning approaches have been proposed in the options markets, they did not focus on direct identification of StatArbs.  
Specifically, \cite{zapart2003statistical} treated StatArbs as an ancillary subject, and although their strategy was labeled as such, it remained substantially exposed to the canonical risk factors of~\cite{black1973pricing}.
\cite{horikawa2024relationship, franccois2025difference} employed StatArbs solely as a lens through which to compare deep hedging and delta hedging strategies, rather than as a phenomenon to be directly uncovered. 
Other machine learning studies in the options markets likewise fall short of addressing the direct identification of StatArbs. 
Consequently, the direct detection of StatArbs via machine learning remains largely uncharted in the literature. 
Our approach directly identifies and capitalizes on StatArbs, while maintaining a prudently low level of risk.


Prior GL methods, a core methodological basis of this study, overlooked that features are tabular in many cases. 
GL excels at learning structured representations through node and edge embeddings \citep{bhatti2023deep}. 
Options market data are inherently  tabular but exhibit relational structure. 
For instance, the maturity and strike prices are tabular features, and options sharing the same underlying instrument are related. 
Treating such features as node or edge features enables the straightforward use of GL; however, tabular data are not embeddings.
Although deep learning can embed such features, it underperforms compared to tree-based models on numerous tabular datasets \citep{shwartz2022tabular,borisov2022deep,mcelfresh2023neural,rana2023comparative,yildiz2024gradient}. 
Hence, rather than relying on deep-learning-based embeddings, 
tree-based embeddings within GL architectures can be a more promising avenue,
which is realized in our method based on tree-based embeddings. 

We present a two-step approach bridging the two key gaps related to the subject and architectures: the lack of StatArb trading strategies in options markets and the insufficient consideration of the tabular features in GL.
The first step focuses on developing a graph nueral network (GNN) architecture that incorporates a tree-based architecture to detect StatArb opportunities, while the second step aims to exploit the identified opportunities through an appropriate, new trading strategy.

In the first step, we develop a GNN architecture, \textbf{R}evised \textbf{N}eural oblivious decision ensemble graph \textbf{Conv}olution (RNConv), along with a prediction target that exclusively contains arbitrages. 
Distinct from prior studies, 
our prediction target is a provably pure arbitrage opportunity, arising from deviations in the prices of synthetic zero-coupon bonds constructed from synthetic-long and underlying instrument positions.
These prices should be identical under the law of one price---or more strictly, under the \textbf{a}rbitrage-\textbf{f}ree (AF) assumption---but diverge in practice (see \fref{fig:arb_time_series}) due to market inefficiencies.
Next, we construct graphs where a node is a put-call pair sharing the same maturity and strike price, with node features comprising past arbitrage opportunities, implied volatility, moneyness, and time to maturity. 
Then, RNConv extracts market-level and idiosyncratic information from the graph and each node, respectively, to predict arbitrage opportunities.
Our RNConv architecture incorporates a tree structure built upon \textbf{N}eural \textbf{O}blivious \textbf{D}ecision \textbf{E}nsembles (NODE)~\citep{popov2019neural}, 
yielding superior performance compared to other GNN architectures on the KOSPI 200 index option datasets. 
RNConv is a \edit{single} model combining a standard GNN with differentiable trees and is trained as a single unit.

To exploit StatArbs in the second step, we present \textbf{S}ynthetic-\textbf{L}ong-\textbf{S}hort-\textbf{A}rbitrage (SLSA) and an SLSA projection method.
We define SLSA as a class of synthetic-long positions satisfying certain constraints.
Every SLSA position exclusively contains arbitrage opportunities, which is confirmed by the property that the price of SLSA positions vanishes under the AF assumption. 
Moreover, SLSA values exhibit zero variance and are neutral to all \cite{black1973pricing} risk factors under the AF assumption. 
Every SLSA position carries minimal risk under the assumption of an AF market or the existence of a present value function, both of which are common.
Although real-world markets may contain arbitrages, they are expected to become AF \citep{varian1987arbitrage,langenohl2018sources}, thereby enabling SLSA to have low risk.
Lastly, we present the SLSA projection method which converts predictions into an SLSA position that yields StatArbs.

From our two-step approach, we set forth a novel formulation of the prediction target and strategy for exploiting StatArbs in options markets through the lens of deep learning in conjunction with a pioneering tree-based GNN.
The main contributions of this study are as follows. 
\begin{itemize}
    \item We present a novel machine-learning prediction target comprising pure arbitrages in options markets.
    \item To the best of our knowledge, this work is the first to introduce RNConv, a graph convolutional architecture that integrates neural trees in order to effectively exploit tabular node features while capturing relational structures inherent in option markets. 
    \item To the best of our knowledge, we are the first to propose SLSA, a class of synthetic-long positions that solely contain arbitrage opportunities and are theoretically low-risk and neutral to all \cite{black1973pricing} risk factors under the arbitrage-free assumption, and present the SLSA projection method that transforms arbitrage predictions into such SLSA positions capable of generating StatArbs.
    \item We demonstrate the effectiveness of our approach on the KOSPI 200 index options, where RNConv statistically significantly outperform benchmarks and SLSA achieves steadily increasing P\&Ls with an average P\&L-contract information ratio of 0.1627.
\end{itemize}

This paper is structured as follows.  
\sref{sec:literature} reviews literature. 
\sref{sec:preliminaries} introduces the preliminaries and background.  
\sref{sec:setup} explains the overarching setup and notation. 
\sref{sec:method} presents our proposed methods, including the prediction target, RNConv, and SLSA. 
\sref{sec:experiments} reports experimental results in the KOSPI 200 index options market. 
Finally, \sref{sec:conclusion} concludes the paper.

\section{Related Work}
\label{sec:literature}
Our study fills gaps in the existing literature concerning learning-based methodologies for StatArbs in option markets, as well as the embedding of tabular features within GNN architectures.
In this section, we review relevant prior work with respect to these two aspects, highlight their limitations, and articulate the motivation behind our approach.

\subsection{Learning-Based Approaches in Option Markets}

Since strong assumptions are required in mathematical models (e.g., geometric Brownian motion in \citep{black1973pricing}, jump diffusion process in \citep{merton1976option}, displaced diffusion process in \citep{rubinstein1983displaced}), neural network approaches to problems in option markets have drawn attention \citep{malliaris1993neural,anders1998improving,wang2024considering,farahani2024black}. 
Neural networks have been studied in a wide range of problems in option markets such as option pricing \citep{malliaris1993neural,dugas2009incorporating,wang2024considering}, implied volatility estimation \citep{malliaris1996using,osterrieder2020neural,zheng2021incorporating}, hedging strategies \citep{shin2012dynamic,chen2012pricing,buehler2019deep}, the optimal stopping problem \citep{kohler2010pricing, becker2019deep}, and risk-neutral density estimation \citep{schittenkopf2001risk}.
Moreover, various international options markets have been covered employing neural networks, including the KOSPI 200 in South Korea \citep{choi2004efficient,park2014parametric}, the S\&P 500 in the United States \citep{yang2017gated,zheng2021incorporating}, the FTSE 100 in the United Kingdom \citep{zhang2021option}, and the DAX 30 in Germany \citep{liu2019wavelet}.

Although deep learning has been applied to a wide range of problems in options markets, the predominant focus in the literature has been on option pricing \citep{ruf2019neural}, while StatArb identification and exploitation in options markets has been addressed only as a secondary topic. 
For instance, \cite{zapart2003statistical} used neural networks to predict volatility 
and subsequently priced options, but their model was not designed to identify arbitrage opportunities.
Although \cite{zapart2003statistical} sought to exploit pricing discrepancies through delta-hedged positions, their exposures to volatility, time, interest rates, and higher-order sensitivities to the underlying instrument indicate that they cannot be regarded as StatArbs. 
\cite{horikawa2024relationship,franccois2025difference} analyzed the difference between deep hedging and delta hedging in the context of StatArbs,  
but their main research focus was not on predicting and exploiting StatArbs, which is our topic.

Learning–based StatArb research focused on stock markets (e.g., \cite{
guijarro2021deep,zhao2022deep}) \edit{with a few discussing options}, while no studies in options markets aim to mainly address StatArbs. 
To bridge this gap, we explore StatArbs in an options market employing graph learning.

\subsection{Graph Neural Networks (GNNs)}

GNNs are deep-learning methods for graph-structured data, where nodes represent entities and edges denote their relationships \citep{liu2022introduction}. 
Unlike convolutional neural networks designed for grid-structured data (e.g., images), GNNs capture intricate relational dependencies in non-Euclidean domains \citep{bhatti2023deep}, such as social networks \citep{chen2022gc}. 
Recently, 
\citet{wang2024considering} employed a GNN to capture momentum spillover effects, where an asset’s past return predicts those of related assets \citep{ali2020shared}, between options, and their GNN outperformed canonical neural networks. 
Motivated by their work, we tackle our StatArb problem by developing 
a GNN to extract market signals in the options market.

Various graph convolutions have been presented thus far \citep{wu2020comprehensive}. 
Our paper focuses on four representative GNN architectures—GCN~\citep{kipf2017semisupervised}, SAGE~\citep{hamilton2017inductive}, GAT~\citep{velickovic2018graph}, and GPS~\citep{rampavsek2022recipe}—chosen for their historical significance, architectural diversity, and practical relevance.  
These methods are listed in PyTorch-Geometric \citep{fey2019fast}, which attests to their wide recognition and validation. 
In PyTorch-Geometric~\citep{fey2019fast}, GCN, SAGE, and GAT are the most cited methods while GPS is one of the most recent methods.  
Accordingly, we adopt them as benchmarks to assess the effectiveness of our RNConv.


GCN~\citep{kipf2017semisupervised} is a foundational GNN that utilizes a linear approximation of spectral graph convolutions on graph-structured data.  
SAGE~\citep{hamilton2017inductive} is an inductive framework that generates node embeddings by aggregating information from a node’s local neighborhood.
GAT~\citep{velickovic2018graph} employs an attention mechanism between each node and its neighbors to assign different importances to each neighbor during aggregation. 
GPS~\citep{rampavsek2022recipe} combines global transformer-based attention and local message passing with positional and structural encodings.
This hybrid architecture can extract both local and global information, showing state-of-the-art performance in their experiments.  
Together, these models mark the evolution toward more expressive and scalable graph representation learning.


However, these four methods—--and prior GNN techniques, to our knowledge—--overlook a key property in real-world settings: features are tabular in many cases.   
Tree-based models have been shown to outperform neural networks on numerous tabular datasets~\citep{shwartz2022tabular,borisov2022deep,mcelfresh2023neural,rana2023comparative,yildiz2024gradient}.  
In option pricing, \citet{ivașcu2021option} found that tree-based methods (e.g., LightGBM~\citep{ke2017lightgbm}, XGBoost~\citep{chen2016xgboost}) surpass both standard neural networks and classical parametric models \citep{black1973pricing,corrado1996skewness}. 
Moreover, NODE~\citep{popov2019neural}, an ensemble of differentiable neural trees, outperformed methods based on non-differentiable trees such as CatBoost~\citep{prokhorenkova2018catboost} and XGBoost~\citep{chen2016xgboost} on some tabular datasets.
Accordingly, we propose RNConv, a GNN that incorporates NODE~\citep{popov2019neural} to more effectively leverage tabular node features.

\section{Preliminaries}
\label{sec:preliminaries}
This section discusses NODE \citep{popov2019neural} and the Low-Rank Cross Network \citep{wang2021dcn}, which underpin the basis of our proposed method, RNConv.

\subsection{Neural Oblivious Decision Ensembles (NODE)}
\label{subsec:node}

NODE \citep{popov2019neural} integrates a decision tree ensemble into deep learning.
It reformulates the oblivious decision trees (ODTs) \citep{kohavi1994bottom,lou2017bdt} into differentiable ODTs, assembles them into a NODE layer, and stacks NODE layers.

An ODT is a binary decision tree where all nodes at the same depth share the same splitting criterion \citep{kohavi1994bottom,lou2017bdt}. 
Given an input $\textbf{x}\in\mathbb{R}^p$, the output of a depth-$d$ ODT is defined as \cite{kohavi1994bottom,lou2017bdt}:
\begin{equation}
    h^{ODT}(\textbf{x})=R_{2-\mathbb{I}(\textbf{s}_1^T \textbf{x}-b_1),~...,~2-\mathbb{I}(\textbf{s}_d^T \textbf{x}-b_d)}\in\mathbb{R} \label{eq:odt}
\end{equation}
where $R \in \mathbb{R}^{\overbrace{2 \times \dots \times 2}^{d}}$ is a response tensor; $R_{i_1, i_2,..,i_d}\in\mathbb{R}$ indicates the entry of $R$ at $(i_1, i_2,..., i_d)$; and $\mathbb{I}(\cdot)$ is a Heaviside step function. 
For all depth-$i$ nodes, $\textbf{s}_i\in\{0,1\}^p$ is a one-hot vector (i.e., $\textbf{1}^T \textbf{s}_i=1$, where $\textbf{1}$ denotes the column vector of all ones used consistently throughout this paper) that selects the splitting feature, and $b_i\in\mathbb{R}$ is the threshold. 

\cite{popov2019neural} define a \textbf{d}ifferentiable oblivious \textbf{d}ecision \textbf{t}ree (DDT) $h^{DDT}(\cdot)$ by replacing $\textbf{s}_i^T \textbf{x}$ and $\mathbb{I}(x')$ in \pref{eq:odt} with $\textbf{x}^T entmax_{\alpha}(\hat{\textbf{s}}_{i})$ and a two-class entmax $\sigma_{\alpha}(x')$, respectively:
\begin{equation}
    h^{DDT}(\textbf{x})=\sum_{i_1,...,i_d\in\{1,2\}^d} R_{i_1,...,i_d} \cdot C_{i_1,...,i_d}(\textbf{x}) \label{eq:dodt1}
\end{equation}
where
\begin{align}
    &C(\textbf{x})=\begin{bmatrix} c_1(\textbf{x}) \\ 1 - c_1(\textbf{x}) \end{bmatrix} \otimes ... \otimes \begin{bmatrix} c_d(\textbf{x}) \\ 1 - c_d(\textbf{x}) \end{bmatrix}, \label{eq:dodt2}\\
    &c_i(\textbf{x})=\sigma_{\alpha}\left(\frac{\textbf{x}^T entmax_{\alpha}(\hat{\textbf{s}}_{i})-b_i}{\kappa_i}\right), \label{eq:dodt3}\\
    &\sigma_{\alpha}(x')=entmax_{\alpha}([x'~0]^T). \label{eq:dodt4}
\end{align}
The $\alpha$-entmax transformation, $entmax_\alpha$~\citep{peters2019sparse}, is a sparsity-inducing activation function that generalizes softmax and sparsemax. 
\cite{popov2019neural} set $\alpha$ to 1.5. 
The parameters $R \in \mathbb{R}^{\overbrace{2 \times \dots \times 2}^{d}}, \hat{s}_i \in \mathbb{R}^{p}, b_i \in \mathbb{R}, \kappa_i \in \mathbb{R}$ are learnable where $\kappa_i$ is introduced for scaling.
The operator $\otimes$ denotes the outer product and constructs the response selection tensor $C(\textbf{x}) \in \mathbb{R}^{\overbrace{2 \times \dots \times 2}^{d}}$.
For example, if $c_1(\textbf{x})=1$ (i.e., $\textbf{s}_1^T \textbf{x}-b_1>0$), then $C_{2,i_2,\dots,i_d}(\textbf{x})=0$, while $C_{1,i_2,\dots,i_d}(\textbf{x})$ can be nonzero, allowing only entries of the form $R_{1,i_2,\dots,i_d}$ to be considered.  
Likewise, if $c_1(\textbf{x}) = 1$ and $c_2(\textbf{x}) = 0$, the output is further narrowed to $R_{1,2,\dots,i_d}$. 
Repeating this process ultimately yields a single selected element from $R$.



\citet{popov2019neural} define a NODE layer $g^{NL}(\textbf{x}) \in \mathbb{R}^{1 \times n_{ddt}}$ as a stack of DDTs $h^{DDT}_j$ for $j \in \{1, \dots, n_{ddt}\}$. Formally,
\begin{equation}
    g^{NL}(\textbf{x}) = \left[h^{DDT}_1(\textbf{x})~h^{DDT}_2(\textbf{x})~\dots~h^{DDT}_{n_{ddt}}(\textbf{x}) \right].
\end{equation}

Finally, NODE is defined with NODE layers $g^{NL}_l$, where $l \in \{1, \dots, l_{NODE}\}$, following DenseNet~\citep{huang2017densely}.  
Its output is given by:
\begin{equation}
    f^{NODE} (\textbf{x}) = \sum_{l=1}^{l_{NODE}} {g_l^{NL} (\textbf{z}_l) \textbf{1}} \quad  \in \mathbb{R} \label{eq:node} \\
\end{equation}
where
\begin{equation}
    \textbf{z}_l = \begin{cases}
    \textbf{x} & \text{if}~l=1 \\
    \left[\textbf{x}^T~g_1^{NL}(\textbf{z}_1)~...~ g_{l-1}^{NL}(\textbf{z}_{l-1})\right]^T     & \text{if}~l>1,
    \end{cases} \label{eq:node1}
\end{equation}
and $\textbf{z}_l\in\mathbb{R}^{p+(l-1)n_{odt}}$. 
Each NODE layer receives the original input and all previous layer outputs, and the final output is the sum of all layer outputs.

\subsection{Low-Rank Cross Network}
\label{subsec:dcn-v2}
The Low-Rank Cross Network \cite{wang2021dcn}, addresses the limitation observed by \citet{beutel2018latent, wang2017deep} that deep neural networks are inefficient at learning feature interactions.
\citet{wang2021dcn} define the ($l+1$)-th layer of a cross network given an input $\textbf{x}=\textbf{x}_0\in\mathbb{R}^p$ as:
\begin{equation}
\textbf{x}_{l+1} = \textbf{x} \odot \left(W_l\textbf{x}_l+\textbf{b}_l\right) +\textbf{x}_l    \label{eq:original_dcn0}
\end{equation}
where $W_l\in\mathbb{R}^{p \times p}$ is learnable, and $\odot$ is element-wise multiplication.
Since $\textbf{x}$ and $\textbf{x}_l$ are multiplied in \eqref{eq:original_dcn0}, the network captures feature interactions. 
By applying a low-rank technique, \citet{wang2021dcn} derived the Low-Rank Cross Network, whose $(l+1)$-th layer is given by:
\begin{equation}
    \textbf{x}_{l+1} = \textbf{x}_0 \odot \left(U_l^{CN}\cdot \phi(C_l^{CN} \cdot \phi({V_l^{CN}}^T \textbf{x}_l))+\textbf{b}_l\right) +\textbf{x}_l \label{eq:original_dcn}
\end{equation}
where $\phi$ is an activation function, and $U_l^{CN},V_l^{CN}\in\mathbb{R}^{p\times p_{CN}}$, $C_l^{CN}\in\mathbb{R}^{p_{CN}\times p_{CL}}$, $\textbf{b}_l\in\mathbb{R}^{p}$ are learnable.

\section{Setup and Notation}
\label{sec:setup}
We delineate the overarching setup adopted throughout the study, encompassing the investment environment, asset types, notation, prediction dataset structure, and underlying assumptions.

\subsection{Investment Environment}
\label{subsec:setup_env}

\textbf{Time Horizon}.
We assume there exists a bijective mapping from all trading dates to $\mathbb{N}$ that preserves temporal order. 
We map each time point on each trading date to $\mathcal{T} = [1, \infty)$ by assigning time point $HH$:$MM$:$SS$ on the $t$-th trading date ($t \in \mathbb{N}$) to $t + \frac{HH \times 60^2 + MM \times 60 + SS}{24 \times 60^2} \in \mathcal{T}$.
Thus, $t \in \mathbb{N}$ and $\tau \in \mathcal{T}$ denote a trading date and a time point, respectively.
We use $\lfloor \tau \rfloor \in \mathbb{N}$ to refer to the trading date corresponding to $\tau \in \mathcal{T}$, where $\lfloor \cdot \rfloor$ denotes the floor function.
We define $o(t)$ as a function that maps each $t \in \mathbb{N}$ to a value in $[t,t+1)\subset\mathcal{T}$ corresponding to the market open time point on the trading date $t$. 
For example, if the market opens at 08:30:00 on trading date $t$, then $o(t) = t + \frac{8 \times 60^2 + 30 \times 60}{24 \times 60^2}$.

\textbf{Assets}.  
An asset $a$ is defined as a structured tuple $(c; p)$, where the semicolon separates the asset type $c$ from the parameter tuple $p$. 
For example, $(PT; M, K)$ is a put option $PT$ with maturity $M$ and strike price $K$, where $c = PT$ and $p = (M, K)$.
In this paper, we only consider asset types from the set $\mathcal{C} = \{UI, PT, CL, SL, LS, SA, RF\}$, where:
\begin{itemize}
    \item $UI$: \textbf{u}nderlying \textbf{i}nstrument,
    \item $PT$: European \textbf{p}u\textbf{t} option,
    \item $CL$: European \textbf{c}al\textbf{l} option,
    \item $SL$: \textbf{s}ynthetic \textbf{l}ong \edit{(see below for details)},
    \item $LS$: synthetic \textbf{l}ong-\textbf{s}hort, as defined in Section \ref{subsec:slsa},
    \item $SA$: synthetic long-\textbf{s}hort \textbf{a}rbitrage, as defined in Section \ref{subsec:slsa}, and
    \item $RF$: \textbf{r}isk-\textbf{f}ree asset.
\end{itemize}

For $c \in \mathcal{C}$, let $\mathcal{A}(c) = \{ a \mid a = (c; p) \}$ denote the set of all assets of type $c$. 
The set of all assets is given by $\mathcal{A} = \bigcup_{c \in \mathcal{C}} \mathcal{A}(c)$. 
We denote by $\mathcal{A}_t(c)$ the set of all $c$-type assets listed in the market on $t \in \mathbb{N}$ (if an asset is synthetic, its listing is defined as the listing of all its constituent assets).
We define $\mathcal{L}_t(c) \subset \mathcal{A}_t(c)$ as the subset of $c$-type assets that are traded by other investors on $t\in\mathbb{N}$ and that do not expire before the market closes on $t+1$. 
Finally, we define $\mathcal{U}_t \subset \mathcal{A}$ as the trading universe for $t\in\mathbb{N}$, i.e., the set of all assets eligible to be invested by the agent on $t$. 
The agent herein is any entity employing the proposed method.

\textbf{Pricing of Assets}.  
For an asset $a \in \mathcal{A}$, we denote its  market price at $\tau \in \mathcal{T}$ by $P_{\tau}(a) \in \mathbb{R}$. 
A present value function $\Pi_{\tau}$ is a linear functional mapping a cash flow to a real number that represents its value at $\tau\in\mathcal{T}$ and exists under the AF assumption (see \citet{skiadas2024theoretical}).
In this paper, we consider only cash flows, each comprising a single flow at a single time point, as inputs to $\Pi_{\tau}$.
Hence, for simplicity, we denote the present value function as a function $\Pi_{\tau}(K;M)$ from a cash flow value $K$ at $M\in\mathcal{T}$ to a real number. 
Since $\Pi_{\tau}$ is a linear functional, we have 
\begin{equation}
\Pi_{\tau}(\lambda K+\lambda' K';M)=\lambda \Pi_{\tau}(K;M)+\lambda'\Pi_{\tau}( K';M) \label{eq:pv_linearity}
\end{equation}
for all $\lambda,\lambda'\in\mathbb{R}$.
If the risk-free interest rate is assumed to be a constant, 
we have 
\begin{equation}
\Pi_{\tau}(K;M)=Ke^{-r_f(M-\tau)}, \label{eq:pv_constant}
\end{equation}
where $r_f\in \mathbb{R}$ is a constant risk-free interest rate with continuous compounding \citep{hull2013fundamentals}.






\textbf{Underlying Instrument}. 
An option is defined on an underlying instrument, such as a stock. 
In this paper, we consider a single underlying instrument, whose  price at $\tau\in\mathcal{T}$ is denoted by $S_{\tau}$. 

\textbf{Put and Call Options}. 
A European put (call) option is the right to sell (buy) an underlying instrument $UI$ at a strike price $K$ upon maturity $M\in\mathcal{T}$. 
A put or call option is fully specified by parameters $UI$, $M$, and $K$.
As we focus on a single underlying instrument, we omit $UI$ for simplicity.
Thus, we write put and call options as $(PT; M, K)$ and $(CL; M, K)$, respectively.
An option yields a payoff only at maturity $M$: $\max\{K - S_M,0\}$ for a put and $\max\{S_M - K,0\}$ for a call; it is zero at all other times.



We now formalize notations for maturities and strike prices. 
We define $\mathcal{M} \subset \mathcal{T}$ and $\mathcal{K} \subset \mathbb{R}$ as the sets of all maturities and strike prices, respectively.
For $c \in \{PT, CL, SL\}$, for $a \in \mathcal{A}(c)$, we let $M_a$ and $K_a$ denote its maturity and strike price, respectively.
Then, for $t\in\mathbb{N}$, we define $\mathcal{L}_t(c; M) = \{ a' \in \mathcal{L}_t(c) : M_{a'} = M \}$ as the set of all type-$c$ assets in $\mathcal{L}_t(c)$ with maturity $M$.
We define $\mathcal{M}_t(c) = \{M_{a'} : a' \in \mathcal{L}_t(c) \}$ as the set of maturities of all type-$c$ assets in $\mathcal{L}_t(c)$.

Under the AF assumption, the put-call parity establishes a relationship between put and call option prices with the same maturity and strike price as follows: for any $M\in\mathcal{M}, K \in \mathcal{K}, \tau\in\mathcal{T}$, 
\begin{equation}
P_{\tau}(PT;M,K)+S_{t} =P_{\tau}(CL;M,K)+\Pi_{\tau}(K;M). \label{eq:put_call_parity}
\end{equation}

\textbf{Synthetic-Longs}. 
A synthetic-long $(SL;M,K)\in \mathcal{A}(SL)$ is a synthetic asset consisting of one long contract of $(CL;M,K)$ and one short contract of $(PT;M,K)$. 
Its price at $\tau\in\mathcal{T}$ is 
\begin{equation}
P_{\tau}(SL;M,K)=-P_{\tau}(PT;M,K)+P_{\tau}(CL;M,K). \label{eq:sl_price}
\end{equation}
Thus, we can rewrite the put-call parity \eqref{eq:put_call_parity} as:
\begin{equation}
S_{\tau}=P_{\tau}(SL;M,K)+\Pi_{\tau}(K;M). \label{eq:put_call_parity2}
\end{equation}
Furthermore, $(SL;M,K)$ yields a payoff of
\begin{equation}
\text{Payoff}_{\tau}({SL;M,K})
=\begin{cases}
    S_\tau - K &\text{if}~\tau=M \\
    0 &\text{if}~\tau \neq M
\end{cases}. \label{eq:sl_payoff}
\end{equation}
In this paper, we only invest in synthetic-longs and their combinations ($LS, SA$). 

\subsection{Notation and Assumptions}
\label{subsec:setup_notation}
\textbf{Notations.}
For finite $X \subset \mathbb{R}$, we define $\min^k X$ as the $k$-th smallest element in $X$. 
For simplicity, we write $\min X := \min^1 X$. 
The notations $\max^k X$ and $\max X := \max^1 X$ are defined analogously.
Given a function $f: X \to \mathbb{R}$, we define $\arg\min_x^k \{ f(x) : x \in X \}$ as the set of elements in $X$ for which $f(x)$ equals the $k$-th smallest value of $f(X)$. 
We similarly define $\arg\min_x := \arg\min_x^1$, as well as $\arg\max_x^k$ and $\arg\max_x$ analogously. 
Lastly, for $\mathcal{A}' \subset \mathcal{A}$ and vectors $\textbf{x}_a\in\mathbb{R}^p$, $a \in \mathcal{A}'$, 
$[\textbf{x}_a^T]_{a \in \mathcal{A}'}\in\mathbb{R}^{\vert \mathcal{A}' \vert \times p}$ denotes a row-stacked matrix ordered lexicographically. 

We utilize two datasets, $\mathcal{D}^{tr} = \{D_t^{tr} : t \in \mathbb{N}\}$ and $\mathcal{D}^{ar} = \{D_t^{ar} : t \in \mathbb{N}\}$ with
\begin{align}
 D_t^{tr} &= \left([\textbf{x}_{tr,a,t-1}^T]_{a\in\mathcal{L}_{t-1}},[\tilde{I}_{tr,a,t}]_{a\in\mathcal{L}_{t-1}} \right),\\
 D_t^{ar} &= \left([\textbf{x}_{ar,a,t-1}^T]_{a\in\mathcal{U}_{t}}, G_{t-1}(p_{dg}), [y_{a,o(t)}]_{a\in\mathcal{U}_{t}}\right).
\end{align}
Each $D_t^{tr}$ and $D_t^{ar}$ is constructed to predict tradability (\ssref{subsec:universe}) and arbitrage (\ssref{subsec:prediction}) on $t$, respectively, using information up to the date of $t-1$. 
Components $\textbf{x}_{tr,a,t-1} \in \mathbb{R}^{p_{tr}}, \textbf{x}_{ar,a,t-1} \in \mathbb{R}^{p_{ar}}$, $G_{t-1}(p_{dg})$, $\tilde{I}_{tr,a,t} \in \{0,1\}, y_{a,o(t)}$ will be specified in 
Sections \ref{sec:method} and \ref{sec:experiments}. 

For the prediction models, we repeatedly split the datasets $\mathcal{D}^{tr}$ and $\mathcal{D}^{ar}$ into train, validation, and test subsets. 
Let $\mathcal{T}_{fit}=\{t_{1},...,t_{n_{fit}},t_{n_{fit}+1}\}\subset \mathbb{N}\cup\{\infty\}$ with  $t_{i}<t_{i+1}~\text{and}~t_{n_{fit}+1}=\infty$ denote the starting dates of test splits, where $n_{fit} \in \mathbb{N}$. 
Each index $i \in [1:n_{fit}]$ defines a train-test round. 
Next, for each $k\in\{tr,ar\}$, we define the split time index sets $\mathcal{T}_{test,i}^k, \mathcal{T}_{train,i}^k,\mathcal{T}_{val,i}^k$ as  
\begin{align}
    &\mathcal{T}_{test,i}^k = [t_{i}:t_{i+1}), \label{eq:time_set_test}\\
    &\mathcal{T}_{train,i}^k \cup \mathcal{T}_{val,i}^k = [1:t_{i}),\\
    &\mathcal{T}_{train,i}^k \cap \mathcal{T}_{val,i}^k =\emptyset,
\end{align}
where $\mathcal{T}_{val,i}^k$ is randomly sampled from $[1:t_i)$ using a split ratio $p_{val}=|\mathcal{T}_{val,i}^k|/(|\mathcal{T}_{val,i}^k|+|\mathcal{T}_{train,i}^k|)$. 
Then, we define data splits as $\mathcal{D}_{train,i}^k = \{D_t^k:t \in \mathcal{T}_{train,i}^k\}$,  $\mathcal{D}_{val,i}^k = \{D_t^k:t \in \mathcal{T}_{val,i}^k\}$, and $\mathcal{D}_{test,i}^k = \{D_t^k:t \in \mathcal{T}_{test,i}^k\}$. 
Lastly, we group them as $\mathcal{D}_i^k= (\mathcal{D}_{train,i}^k,\mathcal{D}_{val,i}^k,\mathcal{D}_{test,i}^k)$. 
Given $\mathcal{D}_i^k$, 
we train prediction models on $\mathcal{D}_{train,i}^k$, select the best model based on $\mathcal{D}_{val,i}^k$, and evaluate it on $\mathcal{D}_{test,i}^k$\footnote{
These splits support proper model evaluation. 
Each $\mathcal{D}_i^k$ contains disjoint train, validation, and test splits. 
In addition, \eqref{eq:time_set_test} ensures test datasets are non-overlapping and jointly cover $[t_{1}:\infty)$: $\forall i \neq j,~ \mathcal{T}_{test,i}^k \cap \mathcal{T}_{test,j}^k = \emptyset,~\text{and}~\bigcup_{i=1}^{n_{fit}}\mathcal{T}_{test,i}^k = [t_{1}: \infty)$. 
Moreover, since $t<t'$, $\forall t\in \mathcal{T}_{train,i}^k \cup \mathcal{T}_{val,i}^k, \forall t'\in \mathcal{T}_{test,i}^k$, the splits avoid look-ahead bias. 
}.

\textbf{Assumptions}. 
Throughout the paper, we assume the following:
\begin{itemize}
    \item Positions may consist of any number of option contracts. 
    \item The agent's trades have no impact on markets. 
    \item Option margin requirements can be ignored due to sufficiently large long-term holdings of substitute assets, such as stocks, bonds, and funds. 
    \item Short selling is permitted without limit.  
    \item Unlimited borrowing and lending are allowed at the risk-free rate. 
    \item There are no taxes or transaction costs. 
    \item There are no restrictions on market orders during the opening market.
\end{itemize}
\section{Proposed Method}
\label{sec:method}

\begin{figure*}[t] 
    \centering
    \includegraphics[width=\textwidth]{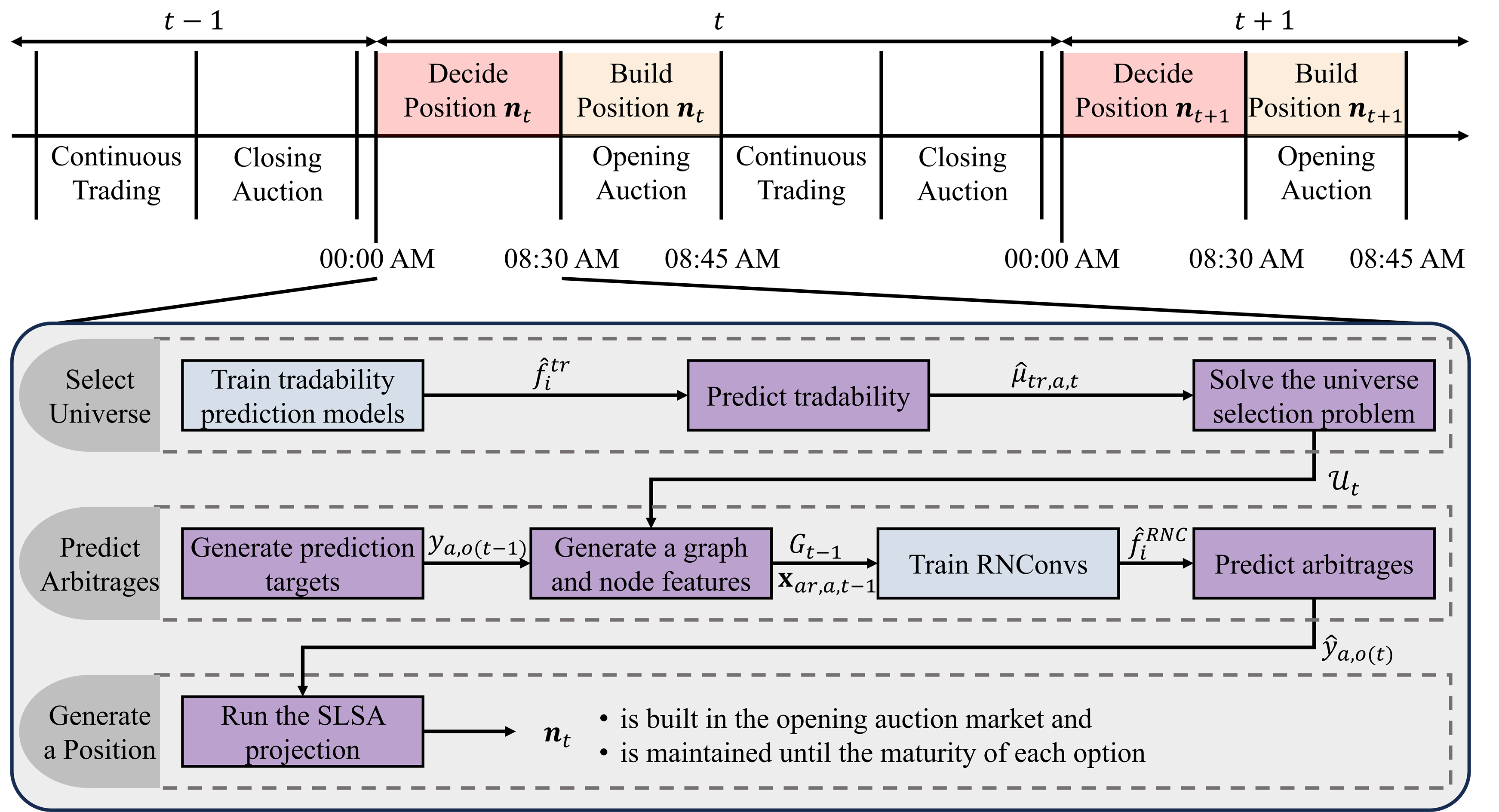}
    \caption{Timeline. \textnormal{The blue boxes are executed only on the dates that belong to $\mathcal{T}_{fit}$ and are skipped on all other days. 
    On each  $t\in [t_{i}:t_{i+1})$, the most recently trained models, $\hat{f}^{tr}_{i}$ and $\hat{f}_{i}^{RNC}$, are used.
    }}
    \label{fig:timeline}
\end{figure*}

The final goal of our method is to determine the position $\textbf{n}_t=\left[n_{a,t}\right]_{a\in\mathcal{U}_t}\in\mathbb{R}^{|\mathcal{U}_t|}$ for each $t\in\mathbb{N}$, 
where each $n_{a,t}$ represents the number of contracts for asset $a$. 
For $t\in\mathbb{N}$, our method \proman{1}~ determines the trading universe $\mathcal{U}_t$\footnote{While the major contributions of our two-step approach are in \proman{2} and \proman{3}, we begin with \proman{1} to provide a foundation for their appropriate functioning.}, \proman{2}~ predicts arbitrage opportunities $\hat{y}_{a,o(t)}$, and \proman{3}~ derives a low-risk position $\textbf{n}_t$. 
The three tasks are conducted for every $t$, as depicted in \fref{fig:timeline} based on times from the Korean market. 
The following sections elaborate on our approach for each task.

The determined position $\textbf{n}_t$ is executed in the opening auction and is held until the expiration of each option included in $\textbf{n}_t$. 
These processes are conducted on every trading date.

Note that the three tasks of the decision process for $\mathbf{n}_t$ rely solely on information available up to the date of $t-1$, excluding any data from $t$ or beyond.
This is because using information available after the decision process on $t$ to determine $\mathbf{n}_t$ incurs look-ahead bias \footnote{See \citet{alonso2024look} and \citet{ter2001eliminating} for a detailed discussion of look-ahead bias.}.  

\subsection{Trading Universe Selection Model}
\label{subsec:universe}

A universe selection method is necessary to ensure back-testing validity, practical feasibility, and proper graph construction. 
For example, deep ITM/OTM options are rarely traded and are often missing from historical data, leading to unavailable prices~\citep{anand2007stealth, blasco2010does}.
Thus, synthetic-longs involving such options cannot be evaluated. 


To address such necessity, for $t\in\mathbb{N}$, we solve \eqref{univ:obj}--\eqref{univ:end} to construct $\mathcal{U}_t\subset \mathcal{L}_{t-1}(SL)$, and on $t$, we only trade the synthetic-longs in $\mathcal{U}_t$.   
Since the trading status on $t$---used to define $\mathcal{L}_{t}(SL)$---is unknown when $\mathcal{U}_t$ is determined, $\mathcal{U}_t$ is defined as a subset of $\mathcal{L}_{t-1}(SL)$, not $\mathcal{L}_{t}(SL)$. 
Moreover, arbitrages on $t$ are predicted only for synthetic-longs traded on $t-1$ (\ssref{subsec:prediction}). 

For each $a \in \mathcal{L}_{t-1}(SL)$, $u_{a,t}\in\{0,1\}$ indicates whether asset $a$ is included in $\mathcal{U}_t$. 
The optimal solution $u_{a,t}^*$ yields universe $\mathcal{U}_t=\{a\in\mathcal{L}_{t-1}(SL):u_{a,t}^*=1\}$. 
Variable $v_{M,t}$ indicates whether synthetic-longs with maturity $M$ are included in $\mathcal{U}_t$.

\begin{align}
&\text{max}  \sum_{a \in \mathcal{L}_{t-1}(SL)} \hat{\mu}_{tr,a,t} u_{a,t}   \label{univ:obj} \\ 
&\text{s.t.}  \nonumber \\
& \sum_{a\in\mathcal{L}_{t-1}(SL) } u_{a,t} = p_{univ},  \label{univ:size}\\
& u_{N_{atm,t-1}(a),t} \geq u_{a,t}, \notag\\
&\quad\quad\quad\quad \forall a \in \mathcal{L}_{t-1}(SL) \setminus \ ATM_{t-1}(SL), \label{univ:near_atm}\\
& \edit{2 v_{M,t} \leq \sum_{a\in\mathcal{L}_{t-1}(SL;M)} u_{a,t} \leq |\mathcal{L}_{t-1}(SL;M)|\cdot v_{M,t},} \notag\\
&\quad\quad\quad\quad\forall M \in \mathcal{M}_{t-1}(SL),  \label{univ:longshort1}\\
& u_{a,t}+u_{a',t}\leq 1, ~ \forall (a,a')\in N_{far,t-1}, \label{univ:far}\\
& u_{a,t}\in\{0,1\}, ~\forall a \in \mathcal{L}_{t-1}(SL), \\
&v_{M,t} \in \{0,1\}, ~ \forall M \in \mathcal{M}_{t-1}(SL). \label{univ:end}
\end{align}

Objective function \pref{univ:obj} aims to select $\mathcal{U}_t$ so as to maximize the number of synthetic longs traded by other investors on $t$, i.e., $\text{max}_{\mathcal{U}_{t}\subset\mathcal{L}_{t-1}(SL)} |\mathcal{U}_{t}\cap\mathcal{L}_{t}(SL)|$. 
Since trading status on $t$ is unknown when determining $\textbf{n}_t$ (\fref{fig:timeline}), we define $I_{tr,a,t} \in \{0,1\}$ as a Bernoulli random variable with $\mu_{tr,a,t}$ indicating whether asset $a$ is traded on $t$ and maximize $\mathbb{E}\sum_{a \in \mathcal{L}_{t-1}(SL)} I_{tr,a,t} u_{a,t}$ instead of the exact number.   
We assume there exists a function $f^{tr}_{i}(\textbf{x}_{tr,a,t-1})$ 
such that $t\in [t_{i}:t_{i+1})$, $\mu_{tr,a,t}=f^{tr}_{i}(\textbf{x}_{tr,a,t-1})+\epsilon_{tr,a,t}$, $\mathbb{E}\epsilon_{tr,a,t}=0$, and $\epsilon_{tr,a,t}$ and $\epsilon_{tr,a',t}$ are independent for all $a \neq a'$.
Here, $\textbf{x}_{tr,a,t-1}$ is a deterministic feature vector.
Then, $\mathbb{E}I_{tr,a,t}=\mathbb{E}\mathbb{E}[I_{tr,a,t}|\mu_{tr,a,t}]=\mathbb{E}\mu_{tr,a,t}=f^{tr}_{i}(\textbf{x}_{tr,a,t-1})$.
We estimate $f^{tr}_{i}$ and $\mu_{tr,a,t}$ as a machine-learning model $\hat{f}^{tr}_{i}$ and $\hat{\mu}_{tr,a,t}=\hat{f}^{tr}_{i}(\textbf{x}_{tr,a,t-1})$, respectively, described later. 
Finally, we have $\mathbb{E}\sum_{a \in \mathcal{L}_{t-1}(SL)} I_{tr,a,t} u_{a,t}=\sum_{a \in \mathcal{L}_{t-1}(SL)} \mu_{tr,a,t} u_{a,t} \approx \sum_{a \in \mathcal{L}_{t-1}(SL)} \hat{\mu}_{tr,a,t} u_{a,t}$.

We train $\hat{f}^{tr}_{i}$ and predict $\hat{\mu}_{tr,a,t}$ using $\mathcal{D}_i^{tr}$.  
Multiple models are trained on $\mathcal{D}_{train,i}^{tr}$ using the target $\tilde{I}_{tr,a,t}$—a realization of $I_{tr,a,t}$. 
The best model $\hat{f}^{tr}_{i}$ is selected using $\mathcal{D}_{val,i}^{tr}$, and it generates predictions $\hat{\mu}_{tr,a,t} = \hat{f}^{tr}_{i}(\textbf{x}_{tr,a,t-1})$ on $\mathcal{D}_{test,i}^{tr}$.

We next discuss the constraints. Constraint \pref{univ:size} ensures the cardinality of $\mathcal{U}_t$ becomes $p_{univ}$, which is a user-specified parameter. 
Constraint \pref{univ:near_atm} ensures near-ATM assets are selected.  
Let $ATM_t(SL)=\{a : a \in \text{argmin}_{(SL;M^*,K)\in\mathcal{L}_t} |S_t-K|,~\forall M^*\in\mathcal{M}_t(SL)\}$ denote the set of all the nearest ATM assets for each maturity. 
For $a \in \mathcal{L}_t(SL)$, let $ N_{atm,t}(a)=\text{argmin}_{a'\in \mathcal{L}_{t}(SL)} \{|K_a-K_{a'}|:~M_a=M_{a'}~\text{and}~|S_t-K_{a'}|<|S_t-K_a|\}$ denote its same-maturity nearest neighbor closer to ATM.  
Then, an asset $a$ is excluded from $\mathcal{U}_t$ if the universe excludes its same-maturity nearest neighbor closer to ATM. 
Since 
$ATM_{t-1}(SL)$ admit no neighbors closer to ATM, they are excluded in \pref{univ:near_atm}. 
Because we expect near-ATM options at $t-1$ are likely to be also traded on $t$, \pref{univ:near_atm} is included.


Constraint \pref{univ:longshort1} enforces that if any maturity-$M$ option is selected, then at least one of the other maturity-$M$ options must be selected. 
Since the positions (cf. \ssref{subsec:slsa}) require both long and short positions on same-maturity options, isolated inclusion of a maturity-$M$ asset renders it unusable. 
Therefore, the right inequality of \pref{univ:longshort1} ensures that if a maturity-$M$ asset is selected (i.e., $u_{a,t}=1$ for some $a\in\mathcal{L}_{t-1}(SL;M)$), then the maturity $M$ itself is selected (i.e., $v_{M,t}=1$).
Moreover, another maturity-$M$ asset must be selected (i.e., $u_{a',t}=1$ for some $a'\in\mathcal{L}_{t-1}(SL;M)$ with $a\neq a'$) by the left inequality of \pref{univ:longshort1}. 


Constraint \pref{univ:far} chooses one of the two same-maturity neighborhood assets that are far away from each other. 
Let $N_{far,t}=\{(a,a')\in\mathcal{L}_{t}(SL)\times\mathcal{L}_{t}(SL): M_a=M_{a'}, |K_a-K_{a'}|=\max\{\min_{a''\in\mathcal{L}_{t}(SL;M_a)}|K_{a''}-K^*|:K^*\in\{K_a,K_{a'}\}\}>\Delta K_{max}\}$ denote the set of the pairs of the same-maturity assets of which strike price differences are large.
Parameter $\Delta K_{max}$ quantifies the notion of being ``far." 
Since a node in our prediction graphs (\ssref{subsec:prediction}) is connected to a fixed number of nearest same-maturity neighbors, overly distant connections may arise without this constraint. 
To avoid such cases, \pref{univ:far} is imposed. 



\subsection{Arbitrage Prediction}
\label{subsec:prediction}



\subsubsection{Arbitrage Prediction Problem Statement}
\label{subsubsec:arb_problem}
The arbitrage prediction problem addressed in this section is a node-level regression problem.  
Given a graph and its node features $\textbf{x}_{ar,a,t-1}$, our goal is to predict the novel arbitrage opportunity $y_{a,o(t)}$ corresponding to each node $a\in\mathcal{U}_t$ using the new proposed architecture, RNConv.
We next discuss these components. 


\textbf{Prediction Target.}
Our arbitrage prediction model aims to predict $y_{a,o(t)}$ for $t\in\mathbb{N}$, $a\in\mathcal{U}_t$. 
For all $\tau\in\mathcal{T}$, $a=(SL;M_a,K_a)\in\mathcal{U}_{\lfloor \tau \rfloor}$\footnote{Here, the subscript $\lfloor \tau \rfloor \in \mathbb{N}$ in $\mathcal{U}_{\lfloor \tau \rfloor}$ implies the trading date corresponding to the time point $\tau$, according to Section~\ref{subsec:setup_env}.}, $y_{a,\tau}$ is defined as
\begin{equation}
    y_{a,\tau}=\delta_{a,\tau}-\bar{\delta}_{M_a,\tau} \label{eq:pred_target} \\ 
\end{equation}
where
\begin{equation}    
    \delta_{a,\tau}=\frac{ S_{\tau}- P_{\tau}(SL;M_a,K_a)}{K_a}, \label{eq:unit_bond} \\
\end{equation}
\begin{equation} 
    \bar{\delta}_{M_a,\tau}=\frac{\sum_{a'\in\mathcal{U}_{\lfloor \tau \rfloor}\cap\mathcal{L}_{\lfloor \tau \rfloor-1}(SL;M_a)} \delta_{a',\tau}}{\vert\mathcal{U}_{\lfloor \tau \rfloor}\cap\mathcal{L}_{\lfloor \tau \rfloor-1}(SL;M_a)\vert}. \label{eq:dbar}
\end{equation}
The price $P_{\tau}(SL;M_a,K_a)$ is computed from the put and call prices via \eqref{eq:sl_price}, and $\bar{\delta}_{M_a,\tau}$ is the average of $\delta_{a',\tau}$ over the maturity-$M_a$ assets in the universe. 
We designate the market open time $o(t)$ as the target time point for prediction (i.e., we predict $y_{a,o(t)}$ amongst many $y_{a,\tau}$).

Variable $\delta_{a,\tau}$ represents the price at $\tau$ of a zero-coupon bond maturing at $M_a$ with face value 1.  
By \eqref{eq:unit_bond}, $\delta_{a,\tau}$ is the cost of building a position with $1/K_a$ units of the underlying instrument and $-1/K_a$ synthetic-long contracts.
When the positions of $\delta_{a,\tau}$ are cleared at $M_a$, the resulting payoff is 
\begin{equation}
    \text{Payoff}_{M_a}(\delta_{a,\tau})= \frac{S_{M_a}}{K_a}-\frac{S_{M_a}-K_a}{K_a}=1, \label{eq:d_payoff}
\end{equation}
which follows from \eqref{eq:sl_payoff}. 
Hence, $ \delta_{a,\tau}$ serves as the \textbf{d}iscounted price—or equivalently, the \textbf{d}iscount factor—of a zero-coupon bond with face value 1 and maturity $M_a$.

Arbitrage opportunities can be found in $\delta_{a,\tau}$'s. 
Under the AF assumption, $\delta_{a,\tau}=\delta_{a',\tau}$ should hold for any $a,a'\in\mathcal{L}_{\lfloor \tau \rfloor-1}(SL;M)$, as both represent the same zero-coupon bond\footnote{Under the AF assumption, the law of one price holds \citep{skiadas2024theoretical}.}. 
However, in reality, we have $\delta_{a,\tau} \neq \delta_{a',\tau}$ in many cases, as shown in \fref{fig:arb_time_series} since the blue and red curves should be zero if $\delta_{a,\tau}=\delta_{a',\tau}$ were true. 
Hence, arbitrage opportunities can be found in $\delta_{a,t}$'s.

\begin{figure*}[t] 
    \centering
    \includegraphics[width=\textwidth]{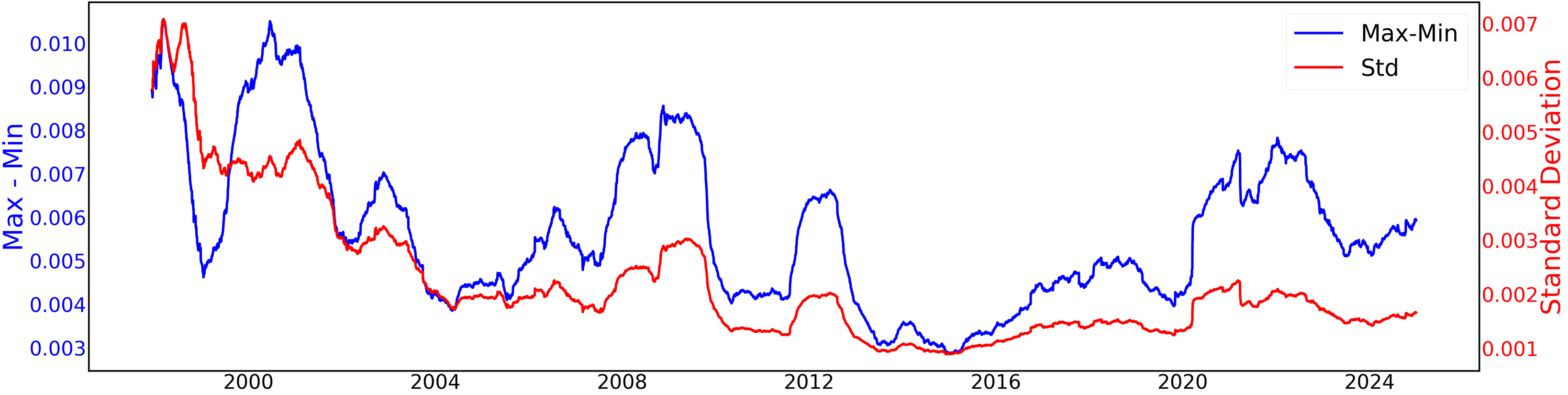}
    \caption{Time-Series of Rolling 252-Trading-Day Avg. of Max-Min and Std. of $\delta_{a,t,o}$ over Nearest-Term, Near-ATM Options. 
    \textnormal{
    On each $t$, the range and standard deviation of $\delta_{a,t,o}$ are computed from the open price data of the KOSPI 200 index and near-ATM, nearest-maturity options ($S_{t,o}/K \in (0.95, 1.05)$). Then, the 252-day rolling averages are plotted.}}
    \label{fig:arb_time_series}
\end{figure*}

To isolate arbitrages embedded in $\delta_{a,\tau}$, we subtract the average $\bar{\delta}_{M_a,\tau}$ from $\delta_{a,\tau}$ and define $y_{a,\tau}$ as in \pref{eq:pred_target}. 
Predicting $\delta_{a,\tau}$ itself would conflate the bond price with arbitrages, as implied by \pref{eq:d_payoff}.  
Thus, we remove the bond by subtracting $\bar{\delta}_{M_a,\tau}$, which serves as an estimate of the bond price.

If $y_{a,\tau} > 0$ were known in advance, arbitrages could be exploited by shorting $\delta_{a,\tau}$ and longing $\bar{\delta}_{M_a,\tau}$ at $\tau$, yielding an arbitrage profit of $|y_{a,\tau}|$ at $\tau$. 
The reverse would symmetrically yield an arbitrage profit.  
Hence, predicting \pref{eq:pred_target} directly identifies pure arbitrages.

Proposition~\ref{prop:pred_target} shows that $y_{a,\tau}$ comprises pure arbitrages, and that $\bar{\delta}_{M_a,\tau}$ estimates the expected discount factor. 
This follows from the fact that, under the AF assumption,  $y_{a,\tau}$ vanishes, and $\bar{\delta}_{M_a,\tau}$ becomes the present value of a unit zero-coupon bond maturing at $M_a$ in Proposition~\ref{prop:pred_target}. 
\begin{proposition} \label{prop:pred_target}
     If the market is AF, then the following holds:
        \begin{align}
            &y_{a,\tau}=0 \label{eq:arb_free_y},\\
            &\delta_{a,\tau}=\bar{\delta}_{M_a,\tau}=\Pi_{\tau}(1;M_a). \label{eq:arb_free_d}
        \end{align}
\end{proposition}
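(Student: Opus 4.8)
The plan is to reduce the statement to the put-call parity in its arbitrage-free form \eqref{eq:put_call_parity2} together with the linearity \eqref{eq:pv_linearity} of the present value functional $\Pi_{\tau}$, which exists under the AF assumption. The whole argument is essentially a chain of substitutions, with the one substantive observation being that $\delta_{a,\tau}$ turns out to depend on $a$ only through its maturity.

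First I would rewrite $\delta_{a,\tau}$. Fix $\tau\in\mathcal{T}$ and $a=(SL;M_a,K_a)\in\mathcal{U}_{\lfloor\tau\rfloor}$. Applying \eqref{eq:put_call_parity2} with maturity $M_a$ and strike $K_a$ gives $S_{\tau}-P_{\tau}(SL;M_a,K_a)=\Pi_{\tau}(K_a;M_a)$, so by \eqref{eq:unit_bond}
\[
\delta_{a,\tau}=\frac{\Pi_{\tau}(K_a;M_a)}{K_a}.
\]
By linearity \eqref{eq:pv_linearity}, $\Pi_{\tau}(K_a;M_a)=K_a\,\Pi_{\tau}(1;M_a)$, hence $\delta_{a,\tau}=\Pi_{\tau}(1;M_a)$; the right-hand side is strike-independent and depends on $a$ only via $M_a$.

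Next I would evaluate $\bar{\delta}_{M_a,\tau}$ in \eqref{eq:dbar}. Every $a'$ in the index set $\mathcal{U}_{\lfloor\tau\rfloor}\cap\mathcal{L}_{\lfloor\tau\rfloor-1}(SL;M_a)$ has maturity $M_{a'}=M_a$ by definition of $\mathcal{L}_{\lfloor\tau\rfloor-1}(SL;M_a)$, so by the previous step $\delta_{a',\tau}=\Pi_{\tau}(1;M_{a'})=\Pi_{\tau}(1;M_a)$ for each such $a'$. That index set is nonempty because $a$ itself lies in it (as $\mathcal{U}_{\lfloor\tau\rfloor}\subset\mathcal{L}_{\lfloor\tau\rfloor-1}(SL)$ and $a$ has maturity $M_a$), so the average in \eqref{eq:dbar} is well defined and equals the common value $\Pi_{\tau}(1;M_a)$. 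Therefore $\bar{\delta}_{M_a,\tau}=\Pi_{\tau}(1;M_a)=\delta_{a,\tau}$, which is \eqref{eq:arb_free_d}, and \eqref{eq:arb_free_y} follows at once from the definition \eqref{eq:pred_target}, since $y_{a,\tau}=\delta_{a,\tau}-\bar{\delta}_{M_a,\tau}=0$.

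I do not anticipate a real obstacle; the only points needing care are (i) invoking \eqref{eq:put_call_parity2} in precisely its AF form and using that $\Pi_{\tau}$ exists under the AF assumption, (ii) using linearity to factor out $K_a$ so that $\delta_{a,\tau}$ becomes strike-independent, and (iii) checking that the averaging set in \eqref{eq:dbar} is nonempty so $\bar{\delta}_{M_a,\tau}$ is well defined. The remainder is direct substitution.
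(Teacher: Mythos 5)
Your proposal is correct and follows essentially the same route as the paper's proof: apply the put-call parity \eqref{eq:put_call_parity2} and linearity \eqref{eq:pv_linearity} to get $\delta_{a,\tau}=\Pi_{\tau}(1;M_a)$, observe this value is common to all assets in the averaging set so $\bar{\delta}_{M_a,\tau}$ equals it too, and conclude $y_{a,\tau}=0$ from \eqref{eq:pred_target}. Your added checks (strike-independence, nonemptiness of the averaging set) are careful elaborations of the same argument rather than a different approach.
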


\begin{proof}{Proof}
From \eqref{eq:put_call_parity2} and \eqref{eq:pv_linearity}, we obtain $S_{\tau} - P_{\tau}(SL; M_a, K_a) = K_a \cdot \Pi_{\tau}(1, M_a)$.  
Substituting this result into \eqref{eq:unit_bond} yields $\delta_{a,\tau} = \Pi_{\tau}(1; M_a)$ for all $a \in \mathcal{U}_{\lfloor \tau \rfloor} \cap \mathcal{L}_{\lfloor \tau \rfloor -1}(SL; M_a) $, establishing \eqref{eq:arb_free_d}.  
Thus, \eqref{eq:arb_free_y} follows by substituting \eqref{eq:arb_free_d} into \eqref{eq:pred_target}.
\Halmos
\end{proof}


\textbf{Graph Construction.}
To predict $y_{a,o(t)}$ for $a\in\mathcal{U}_{t}$ and $t\in\mathbb{N}$, a graph $G_{t-1}(p_{dg})=(\mathcal{U}_{t},L_{t-1}(p_{dg}))$ is considered  where  
$\mathcal{U}_{t}$ is the node set\footnote{
The subscript $t$ of $\mathcal{U}_{t}$ in $G_{t-1}(p_{dg})$ does not imply the use of future information at $t$ to predict $y_{a,o(t)}$, as clarified in \ssref{subsec:universe}; it merely indicates that the assets in $\mathcal{U}_{t}$ are traded on $t$.
}, $L_{t-1}(p_{dg})\subset\mathcal{U}_{t}\times\mathcal{U}_{t}$ is the edge set, and $p_{dg}\in[0,1]$ is a hyperparameter. 
If $(a,a')\in L_{t-1}(p_{dg})$, one of the following is satisfied:
\begin{enumerate}
    \item $M_a=M_{a'}$ and $K_{a} \in argmin^{k'}_{a''} \{|K_{a'}-K_{a''}|:a''\in\mathcal{U}_t,M_{a''}=M_{a'}\}$ where $k'$ is the nearest multiple of two to $p_{dg} \cdot \left| \{ a'' \in \mathcal{U}_t : M_{a''} = M_{a'} \} \right|$ (set to one if the result is zero).
    \item $K_a=K_{a'}$ and $M_{a} \in argmin^{k'}_{a''} \{|M_{a'}-M_{a''}|:a''\in\mathcal{U}_t,K_{a''}=K_{a'}\}$ where $k'$ is the nearest multiple of two to $p_{dg} \cdot \left| \{ a'' \in \mathcal{U}_t : K_{a''} = K_{a'} \} \right|$ (set to one if the result is zero).
\end{enumerate}
The first condition requires that $a$ and $a'$ share the same maturity, and that $a$ is among the nearest-$k'$ neighbors of $a'$ by strike price.
The second applies the same logic with strike price and maturity reversed.
Since strike prices increase by a fixed increment (e.g., 2.5 for KOSPI 200 options), each option typically has two nearest same-maturity neighbors. 
Thus, $k'$ is set as a multiple of two and applied analogously for maturity.
\textbf{Node Features.}
For $t\in\mathbb{N}$, we use the following as node features $\textbf{x}_{ar,a,t-1}$ to predict arbitrage $y_{a,o(t)}$ where $c(t),h(t),l(t)\in\mathcal{T}$ map $t\in\mathbb{N}$ into the values in $\mathcal{T}$ corresponding to the time points when the market closes, achieves high, and becomes low, respectively: 
\begin{itemize}
    \item Moneyness ($S_{c(t-1)} - K_a$) and days remaining until $M_a$, 
    \item $y_{a,c(t-1)},y_{a,o(t-1)}$: the closing and opening values of $y$ on $t-1$, 
    \item $\hat{y}_{a,h(t-1)}, \hat{y}_{a,l(t-1)}$: the estimated high and low values of $y$ on $t-1$. The high (low) estimate uses the high (low) put and low (high) call prices, 
    \item $y_{a,c(t-1)}-y_{a,c(t-2)}$: the change in the closing target value between $t-1$ and $t-2$, 
    \item $\hat{\sigma}_{im,a,t-1}^{PT},\hat{\sigma}_{im,a,t-1}^{CL}$: the implied volatilities of the put and call options of $a$ on $t-1$.
\end{itemize}


\subsubsection{Revised Neural Oblivious Decision Ensemble Graph Convolution (RNConv)}
\label{subsubsec:rnconv}
We first present the Revised NODE (RNODE) layer, which underpins RNConv. Then, we propose the RNConv layer and the overall RNConv architecture.

\textbf{RNODE Layer.} 
To enhance NODE performance, we present the RNODE layer by replacing $b_i,\kappa_i$, and $entmax_{\alpha}(\hat{\textbf{s}}_i)$ in \pref{eq:dodt3} with batch normalization \citep{ioffe2015batch}, a low-rank cross network \citep{wang2021dcn}, and a multi-layer perceptron.
Let $d$ and $n_{rdt}$ be hyperparameters denoting the depth and number of trees in an RNODE layer, respectively, and $\gamma$ be the dichotomizing hyperparameter.
Given an input $\textbf{x} \in \mathbb{R}^p$, the output of \textbf{RN}ODE \textbf{L}ayer is defined as
\begin{equation}
    g^{RNL} (\textbf{x}) = \left[h_{rdt,1}(\textbf{x})~...~h_{rdt,n_{rdt}}(\textbf{x}) \right] \in \mathbb{R}^{1\times n_{rdt}} \label{eq:rdodt1} 
\end{equation}
where for~$j\in\{1,2,...,n_{rdt}\}, k\in\{1,2,..,{d}\}$,
\begin{align}
    &h_{rdt,j}(\textbf{x})=\sum_{i_1,...,i_d\in\{1,2\}^d} R_{j,i_1,...,i_d} \cdot C_{j,i_1,...,i_d} (\textbf{x}) \in \mathbb{R}, \label{eq:rdodt2}\\
    &C_j(\textbf{x})=\begin{bmatrix} c_{j,1}(\textbf{x}) \\ 1 - c_{j,1}(\textbf{x}) \end{bmatrix} \otimes ... \otimes \begin{bmatrix} c_{j,d}(\textbf{x}) \\ 1 - c_{j,d}(\textbf{x}) \end{bmatrix}, \label{eq:rdodt3}\\
    &c_{j,k}(\textbf{x})=\sigma_{\alpha}(\gamma \cdot VBN(x'_{j,k})) \in \mathbb{R}, \label{eq:rdodt4}\\
    &\begin{bmatrix} \textbf{x}'_1 \\ \textbf{x}'_2 \\ ... \\ \textbf{x}'_{n_{rdt}} \end{bmatrix} 
    =MLP(CrossNet(\textbf{x}))\in \mathbb{R}^{n_{rdt}d},  \label{eq:rdodt5}\\
    &\textbf{x}'_j=[x'_{j,1}~x'_{j,2}~...~x'_{j,d}]^T \in \mathbb{R}^{d \times 1}. \label{eq:rdodt6}  
\end{align}
We define $h_{rdt,j}(\textbf{x})$ as a \textbf{r}evised \textbf{d}ifferentiable oblivious \textbf{t}ree. 
Equations \eqref{eq:rdodt2} and \eqref{eq:rdodt3} are derived from \eqref{eq:dodt1} and \eqref{eq:dodt2}, but differ by including a tree index $j$, which accounts for multiple distinct RDTs. 
The function $\sigma_\alpha(\cdot)$ is defined in \pref{eq:dodt4}. 
Figure \ref{fig:rdt} illustrates an RNL layer. 
\begin{figure*}[t] 
    \centering
    \includegraphics[width=\textwidth]{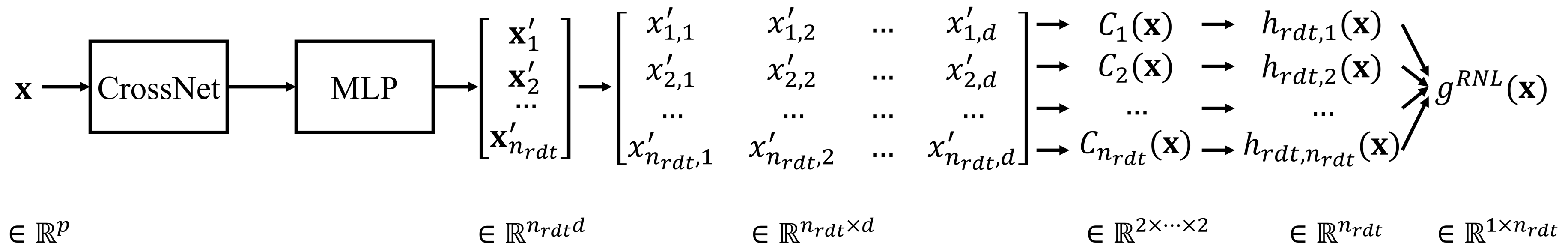}
    \caption{Revised Neural Oblivious Ensemble (RNODE) Layer}
    \label{fig:rdt}
\end{figure*}

RNODE differs from NODE by replacing \eqref{eq:dodt3} with \eqref{eq:rdodt4}–\eqref{eq:rdodt6}.  
In \eqref{eq:rdodt4}, $VBN(\cdot)$ is vanilla batch normalization without learnable parameters, 
while $BN(\cdot)$ includes learnable parameters 
\citep{ioffe2015batch}.
Extending the low-rank cross network \eqref{eq:original_dcn} of \citet{wang2021dcn}, we utilize a batch-normalized variant $CrossNet(\cdot)$ in \eqref{eq:rdodt5}  to enhance optimization stability, defined as
\begin{equation}
    CrossNet(\textbf{x}) = \textbf{z}_{l_{CN}} \in \mathbb{R}^p \label{eq:crossnet_rnode}
\end{equation}
where for $l\in\{0,1,...,l_{CN}-1\}$,
\begin{align}
    & \textbf{z}_{l}= \begin{cases}
        BN(\textbf{x}) \in \mathbb{R}^p ~&\text{if}~l=0\\
        BN(\textbf{x}_{l}) \in \mathbb{R}^p ~&\text{if}~l \geq 1\\
    \end{cases}, \label{eq:crossnet1}\\
    &\textbf{x}_{l+1}=\textbf{x} \odot \left(U_l^{CN}\phi(C_l^{CN}\phi({V_l^{CN}}^T \textbf{z}_l))+\textbf{b}_l^{CN}\right) +\textbf{z}_l.
    \label{eq:crossnet2}
\end{align}
Function $MLP(\cdot)$ is a multilayer perceptron of which last layer does not have a bias and activation function.  
Since the output of $MLP(\cdot)$ passes through $VBN(\cdot)$, the last layer's bias becomes meaningless, so it is removed. 

To encourage balanced branching, we use $VBN(\cdot)$ in \pref{eq:rdodt4}. 
In NODE \citep{popov2019neural}, $b_i$ and $\kappa_i$ determines the branching of a DODT. 
However, depending on the training process, $|b_i|$ can grow arbitrarily large, causing imbalanced branching. 
Likewise, if $|\kappa_i|$ becomes large, $c_i(\textbf{x})$ in \eqref{eq:dodt3} becomes nearly identical across all inputs, rendering branching meaningless.
To address these issues, RNODE replaces $b_i$ and $\kappa_i$ with the batch mean and standard deviation within $VBN(\cdot)$.  
As the batch mean is likely to lie near the center of a batch, branching becomes balanced. 
Normalizing by the batch standard deviation ensures unit variance in $VBN(x'_{l,i})$, which is then scaled by $\gamma \in \mathbb{R}$ to push $c_{j,i}$ toward 0 or 1, addressing the second issue\footnote{
Note that too large $\gamma$ may cause gradient saturation during training. 
}.

To overcome the decision tree's limitation of having an axis-aligned decision boundary, in \pref{eq:rdodt5}, we replace $entmax_{\alpha}(\hat{\textbf{s}}_i)$ with $MLP(CrossNet(\cdot))$. 
In classical decision trees, where each split compares a single variable to a threshold, approximating the true boundary $x_1 - x_2 = 0$ requires many stair-shaped splits (orange in \fref{fig:decision_boundary}). 
In contrast, allowing the direct use of $x_1 - x_2 > 0$ enables a single, more efficient split.
However, NODE uses $\textbf{x}^T entmax_{\alpha}(\hat{\textbf{s}}_{i})-b_i$ in \eqref{eq:dodt3} to mimic the single-variable criterion of classical decision trees.
To address this inefficiency, we replace $\textbf{x}^T entmax_{\alpha}(\hat{\textbf{s}}_{i})-b_i$ with $\textbf{x}^T \hat{\textbf{s}}_{i}-b_i$, which can learn decision boundaries like $x_1 - x_2 = 0$.  
We further extend this linear form to a multilayer perceptron and cross network, enabling the model to learn nonlinear decision boundaries.


\begin{figure*}[t] 
    \centering
    \includegraphics[width=0.4\textwidth]{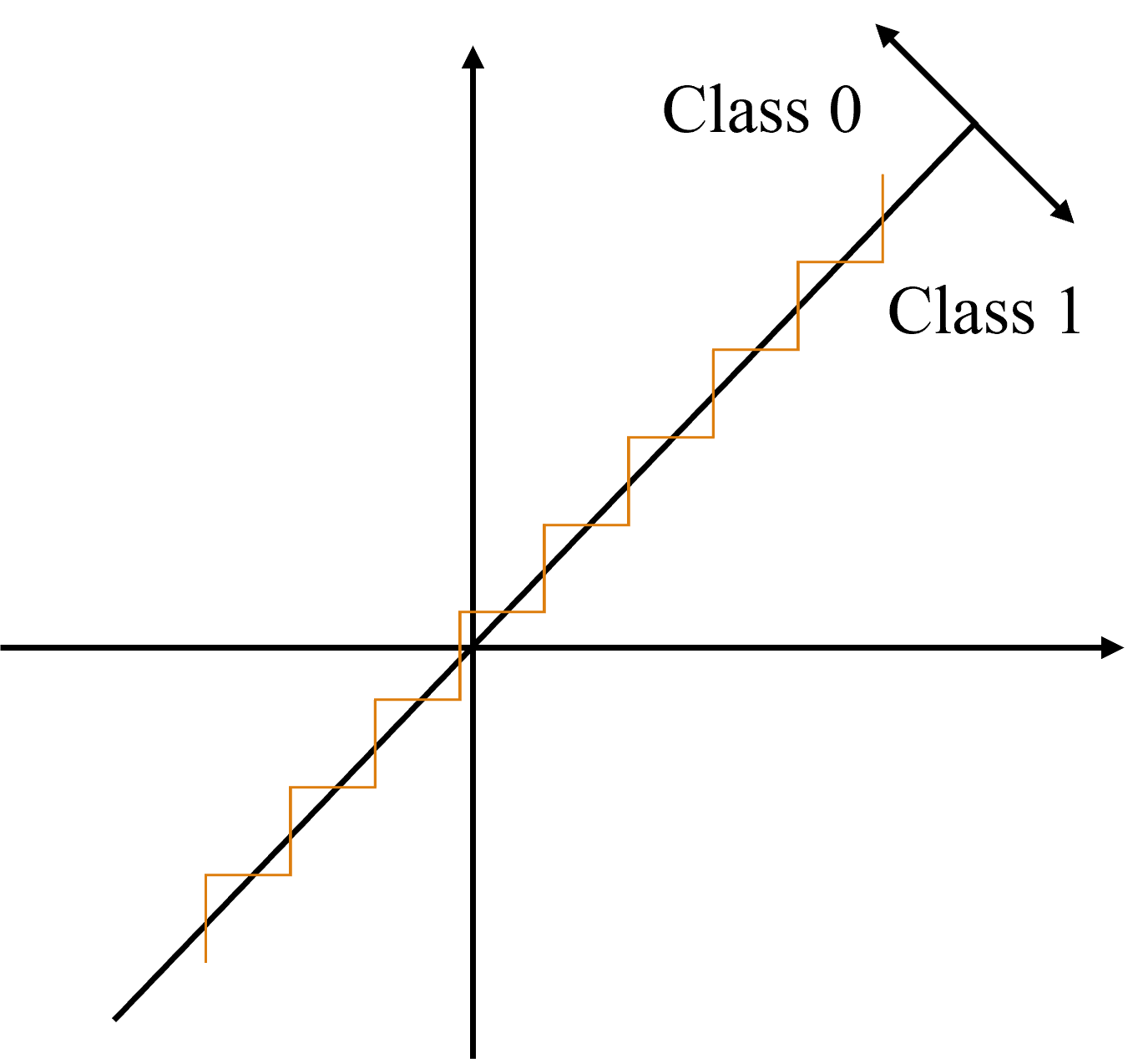}
    \caption{Example of Inefficient Decision Boundary of Decision Tree Model}
    \label{fig:decision_boundary}
\end{figure*}

\begin{figure*}[t] 
    \centering
    \includegraphics[width=\textwidth]{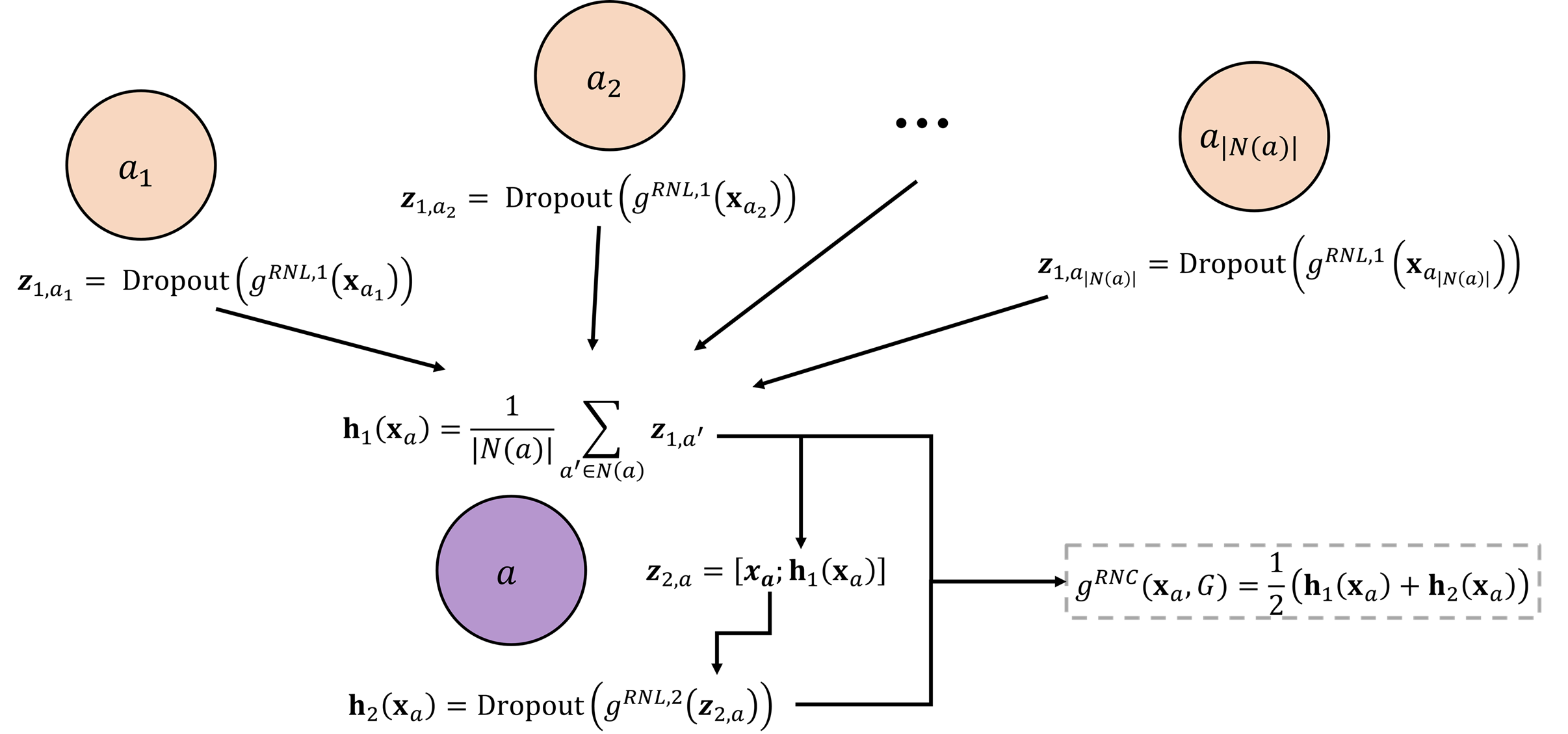}
    \caption{RNConv Layer}
    \label{fig:RNCLayer}
\end{figure*}

\textbf{RNConv Layer.}
Given a graph $G=(\mathcal{U},L)$\footnote{
Sets $\mathcal{U}$ and $L$ denote the node and edge sets, respectively, previously written as $(\mathcal{U}_{t}, L_{t-1}(p_{dg}))$. 
When discussing RNConv, we omit the subscripts $t$, $t-1$, and the argument $p_{dg}$, and simply write $G = (\mathcal{U}, L)$ instead of $G_{t-1}(p_{dg})=(\mathcal{U}_{t},L_{t-1}(p_{dg}))$ since RNConv can be applied to other problems.} and node feature vectors $\textbf{x}_a\in\mathbb{R}^p$ for all $a\in\mathcal{U}$, we define an RNConv layer as
\begin{equation}
    g^{RNC}(\textbf{x}_a,G) = \frac{1}{2} \left( \textbf{h}_1(\textbf{x}_a)+\textbf{h}_2(\textbf{x}_a) \right) \in \mathbb{R}^{n_{rdt}} \label{eq:rnc_layer}
\end{equation}
where
\begin{align}
    &\textbf{z}_{1, a} = \operatorname{Dropout} \left(g^{RNL,1}(\textbf{x}_{a});q_1 \right) \in \mathbb{R}^{n_{rdt}}, \label{eq:rnc_layer1} \\
    &\textbf{h}_1(x_a) = \frac{1}{|N(a)|}\sum_{a'\in N(a)} \textbf{z}_{1,a'} \in \mathbb{R}^{n_{rdt}}, \label{eq:rnc_layer2}\\
    &\textbf{z}_{2, a}= \left[
            \textbf{x}_a;
            \textbf{h}_1(\textbf{x}_a) \right] \in \mathbb{R}^{p+n_{rdt}}, \label{eq:rnc_layer3}\\
    &\textbf{h}_2(\textbf{x}_a) = 
        \operatorname{Dropout}\left(
            g^{RNL,2} \left( \textbf{z}_{2, a} \right);q_1 
        \right) \in \mathbb{R}^{n_{rdt}}, \label{eq:rnc_layer4}\\
    &N(a)=\{a': (a',a)\in L\} \label{eq:rnc_layer5}.
\end{align}
The architecture is visualized in \fref{fig:RNCLayer}. 
We first compute $\textbf{z}_{1,a}$'s for all $a\in\mathcal{U}$ in \eqref{eq:rnc_layer1}. 
By using Dropout, different subsets of trees in $g^{RNL,1}$ learn from different samples, enhancing tree diversity. 
Next, we aggregate the neighborhood's output by averaging $\textbf{z}_{1,a'}$ in \eqref{eq:rnc_layer2}. 
Then, we concatenate the input $\textbf{x}_a$ and aggregated vector $\textbf{h}_1(\textbf{x}_a)$ in \eqref{eq:rnc_layer3}, and pass the result into the second RNODE layer that extracts node-level information using both local and aggregated context in \eqref{eq:rnc_layer4}.
As in \eqref{eq:rnc_layer1}, we apply Dropout in \eqref{eq:rnc_layer4} to promote tree diversity. 
Finally, the output in \eqref{eq:rnc_layer} is the average of the neighborhood information  ($\textbf{h}_1$) and the node’s own representation ($\textbf{h}_2$).

\textbf{RNConv.} 
We define RNConv as 
\begin{equation}
    f^{RNC} (\textbf{x}_a,G) = \frac{1}{n_{rdt}l_{RNC}}\sum_{l=1}^{l_{RNC}} g^{RNC}_l(\textbf{e}_{l,a},G)\textbf{1} \in \mathbb{R} \label{eq:rnode} 
\end{equation} 
where
\begin{equation}
    \textbf{e}_{l,a} = \begin{cases}
    \operatorname{Dropout}(\textbf{x}_a;q_2) \in \mathbb{R}^p ~\text{if} ~l=0\\
    \left[\textbf{x}_a;\frac{1}{l-1}\sum_{i=1}^{l-1} g^{RNC}_{i}(\textbf{e}_{i,a})\right] \in \mathbb{R}^{p+n_{rdt}} ~o.w.
    \end{cases} \label{eq:rnode1}
\end{equation} 
Analogous to NODE in \eqref{eq:node}, we take the average over all RNODE layer outputs in \eqref{eq:rnode}.
In \eqref{eq:rnode1}, the average of the interim outputs is fed into the next NODE layer, instead of concatenation as in \eqref{eq:node1}, to avoid excessive parameter growth. 
Dropout is applied to enhance tree diversity.

\subsection{Synthetic-Long-Short Arbitrage}
\label{subsec:slsa}
In this section, we propose SLSA, a class of synthetic-long positions designed to tackle the problem of exploiting StatArbs using the predictions from \ssref{subsec:prediction}. 
We begin by presenting the synthetic-long-short position, which serves as the foundation for SLSA. 
We then propose SLSA and establish its minimal theoretical risk under the AF assumption. 
Lastly, we introduce the SLSA projection, which maps predictions into tradable SLSA positions capable of realizing StatArbs. 
Each position is executed at market open and held to maturity, as described in \sref{sec:method}, yielding profits upon inception without intermediate and final cash flows.

\textbf{Synthetic-Long-Short.}
We define a synthetic-long-short $(LS; U, \mathbf{n})$, as a synthetic asset consisting of $m\in\mathbb{R}$ units of the underlying instrument and $n_a \in \mathbb{R}$ contracts of $a \in U$, satisfying 
\begin{equation}
m+\sum_{a \in U} n_a = 0 \label{eq:def_ls} 
\end{equation}
where $U$ is a subset of $\mathcal{U}_t$, and $\mathbf{n}=[n_a]_{a\in U} \in \mathbb{R}^{|U|}$ is a row-stacked vector ordered lexicographically. 
Negative $n_a$ means a short position of $|n_a|$ contracts on $a$, which similarly applies to $m$. 
Since $m$ becomes automatically determined if $\textbf{n}$ is specified in \eqref{eq:def_ls}, $m$ is not required as a parameter. 
The time-$\tau \in \mathcal{T}(U)$ price of $(LS;U,\textbf{n})$ is
\begin{align}
&P_{\tau}(LS;U,\textbf{n}) \nonumber\\
&=mS_{\tau}+\sum_{a \in U} n_a P_{\tau}(SL;M_a,K_a) \label{eq:ls_price}\\
&=mS_{\tau}+\sum_{a \in U} n_a \left(S_{\tau}-K_a \delta_{a,\tau} \right) \label{eq:ls_price1}\\
&=\sum_{a \in U} -n_a K_a \left( y_{a,\tau}+\bar{\delta}_{M_a,\tau} \right) \label{eq:ls_price2}
\end{align}
where $\mathcal{T}(U)= \left\{\tau' : \{a\in U: n_a \neq 0\} \subset \mathcal{A}_{\lfloor \tau' \rfloor} (SL) \right\}$ represents the set of time points when all nonzero-position assets of $U$ are listed and unexpired. 
Its definition and \eqref{eq:unit_bond} yield \eqref{eq:ls_price} and \eqref{eq:ls_price1}, respectively.
We obtain \eqref{eq:ls_price2} from \eqref{eq:pred_target} and \eqref{eq:def_ls}. 
Since $P_{\tau}(LS;U,\textbf{n})$ includes both $y_{a,\tau}$ and $\bar{\delta}_{M_a,\tau}$, $(LS;U,\textbf{n})$ can contain both arbitrages and zero-coupon bonds, as discussed in Proposition \ref{prop:pred_target}.

Synthetic long-short positions possess desirable properties in that a position in $(LS; U, \mathbf{n})$ eliminates risks related to both the price and volatility of the underlying instrument and results in zero price variance, as demonstrated in Proposition~\ref{prop:ls}.
Although the two assumptions in Proposition~\ref{prop:ls} may not strictly hold in practice, several well-known pricing models---such as the \cite{black1973pricing} model and the Binomial pricing model~\citep{cox1979option}---are derived under these two assumptions. 
Thus, under the two assumptions, synthetic long-shorts can have equal or lower risk than positions based on such well-known models when they have the same time-related risk level.

\begin{proposition}
If there is no arbitrage in the market and the risk-free interest rate is constant, then the following statements hold for all $\tau \in \mathcal{T}(U)$:\\
\proman{1} each $(LS; U, \mathbf{n})$ is neutral with respect to both the price and volatility of its underlying instrument, \\
\proman{2} $Var(P_{\tau}(LS;U,\textbf{n}))=0$.
\label{prop:ls}
\end{proposition}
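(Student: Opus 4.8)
The plan is to reduce everything to the price formula \eqref{eq:ls_price2}, which expresses $P_{\tau}(LS;U,\mathbf{n})$ purely in terms of the arbitrage terms $y_{a,\tau}$ and the discount-factor averages $\bar{\delta}_{M_a,\tau}$. Since Proposition~\ref{prop:pred_target} already tells us that under the arbitrage-free assumption $y_{a,\tau}=0$ and $\bar{\delta}_{M_a,\tau}=\Pi_{\tau}(1;M_a)$ for every $a$, I would substitute these into \eqref{eq:ls_price2}. With a constant risk-free rate, \eqref{eq:pv_constant} gives $\Pi_{\tau}(1;M_a)=e^{-r_f(M_a-\tau)}$, so the price collapses to $P_{\tau}(LS;U,\mathbf{n})=-\sum_{a\in U} n_a K_a e^{-r_f(M_a-\tau)}$, a quantity depending only on the fixed contract parameters $n_a$, $K_a$, $M_a$, the deterministic constant $r_f$, and the time $\tau$.

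For part \proman{ii}, this already suffices: the right-hand side is a deterministic function of $\tau$ with no dependence on any stochastic quantity, so $\mathrm{Var}(P_{\tau}(LS;U,\mathbf{n}))=0$ for each fixed $\tau\in\mathcal{T}(U)$. For part \proman{i}, I would argue that ``neutral with respect to the price and volatility of the underlying instrument'' means the price has zero sensitivity to the underlying price $S_{\tau}$ and to its volatility parameter $\sigma$ — i.e., all the relevant Black-Scholes Greeks (delta, gamma, and higher-order underlying sensitivities, plus vega) vanish. Since $P_{\tau}(LS;U,\mathbf{n})=-\sum_{a\in U} n_a K_a e^{-r_f(M_a-\tau)}$ contains no $S_{\tau}$ and no $\sigma$, every partial derivative with respect to $S_{\tau}$ or $\sigma$ is identically zero, which is exactly neutrality. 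I would note that this is essentially forced by \eqref{eq:def_ls}: the underlying exposures of the $m$ units and the $\sum n_a$ synthetic-long contracts cancel at order one, and under no-arbitrage the residual is a pure deterministic bond portfolio carrying no further underlying or volatility risk.

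The main obstacle is not the algebra but pinning down the precise meaning of ``neutral'' so the statement is well-posed: Proposition~\ref{prop:pred_target} is stated for $a\in\mathcal{U}_{\lfloor\tau\rfloor}\cap\mathcal{L}_{\lfloor\tau\rfloor-1}(SL;M_a)$, whereas Proposition~\ref{prop:ls} ranges over $\tau\in\mathcal{T}(U)$ and $a\in U$, so I would either assume $U\subset\mathcal{U}_{\lfloor\tau\rfloor}\cap\mathcal{L}_{\lfloor\tau\rfloor-1}(SL)$ (consistent with how $U$ is used elsewhere, e.g., in \ssref{subsec:slsa} where $U\subset\mathcal{U}_t$) or invoke the law of one price directly to get $\delta_{a,\tau}=\Pi_{\tau}(1;M_a)$ for every listed $a$ — the put-call parity \eqref{eq:put_call_parity2} combined with \eqref{eq:pv_linearity} gives this for any $a$ whose put and call are priced arbitrage-free, independent of the universe-membership technicality. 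Once that is settled, the two claims are immediate from the deterministic closed form.
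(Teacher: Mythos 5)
Your proposal is correct and follows essentially the same route as the paper: substitute Proposition~\ref{prop:pred_target} into \eqref{eq:ls_price2}, apply \eqref{eq:pv_constant} to obtain the deterministic closed form $-\sum_{a\in U} n_a K_a e^{-r_f(M_a-\tau)}$, and read off both zero variance and neutrality (the paper phrases neutrality as stochastic independence of the price process from the underlying's price and volatility processes, whereas you phrase it as vanishing sensitivities, but the substance is identical). Your remark about the domain mismatch between Proposition~\ref{prop:pred_target} and the sets $U$, $\mathcal{T}(U)$ is a fair observation that the paper silently elides; resolving it via put-call parity for any listed pair, as you suggest, is the right fix.
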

\begin{proof}{Proof}
Let $\tau$ be an arbitrary element of $\mathcal{T}(U)$. 
Then, the time-$\tau$ price of $(LS; U, \mathbf{n})$ becomes $P_{\tau}(LS;U,\textbf{n})=\sum_{a \in U} -n_a K_a \Pi_{\tau}(1;M_a)=\sum_{a \in U} -n_a K_a e^{-r_f(M_a-{\tau})}$, where the first equality follows from \eqref{eq:ls_price2} and Proposition~\ref{prop:pred_target}, and the second from \pref{eq:pv_constant}. 
Since $n_a, K_a, r_f, M_a$ are constants, $P_{\tau}(LS;U,\textbf{n})$ depends solely on $\tau$. 
Thus, the process $\{P_{\tau}(LS; U, \mathbf{n})\}_{\tau \in \mathcal{T}(U)}$ is stochastically independent of the volatility and price processes of the underlying instrument over $\mathcal{T}(U)$, implying that $P_{\tau}(LS;U,\textbf{n})$ is neutral to both.
As $P_{\tau}(LS;U,\textbf{n})$ is deterministic, $Var(P_{\tau}(LS;U,\textbf{n}))=0$. 
\Halmos
\end{proof}

However, even under the assumptions in Proposition~\ref{prop:ls}, $(LS; U, \mathbf{n})$ can still be exposed to time-related risk.  
Moreover, if the risk-free interest rate is not constant, as in \citet{vasicek1977equilibrium}, Proposition~\ref{prop:ls} no longer holds.  
Furthermore, synthetic-long-short positions can contain both a unit zero-coupon bond and arbitrage opportunities, but our goal is to exclusively exploit pure arbitrage opportunities.
To address these limitations, we propose \textbf{S}ynthetic-\textbf{L}ong-\textbf{S}hort-\textbf{A}rbitrage (SLSA).


\textbf{Synthetic-Long-Short-Arbitrage (SLSA)}
We define an SLSA $(SA; U, \mathbf{n})$ as a synthetic-long-short satisfying
\begin{align} 
&\sum_{a\in U: M_a=M} n_a=0, ~\forall M\in\mathcal{M}(U),  \label{eq:def_sa1}\\ 
& \sum_{a\in U: M_a=M} K_a \cdot n_a=0, ~\forall M\in\mathcal{M}(U) \label{eq:def_sa2}
\end{align}
where $\mathcal{M}(U) = \{M_a : a \in U\}$ denotes the set of maturities of all assets in $U$.
Because SLSA is a synthetic-long-short, its $\mathbf{n}$ also satisfies \eqref{eq:def_ls}. 
Due to \eqref{eq:def_ls} and \eqref{eq:def_sa1}, we have $m=0$ for every SLSA. 
For all $\tau\in\mathcal{T}(U)$, the time-$\tau$ price of a SLSA is 
\begin{align}
&P_\tau(SA;U,\textbf{n})\nonumber\\
&=\sum_{M\in\mathcal{M}(U)} \left( - \bar{\delta}_{M,\tau} \sum_{a\in U:M_a=M} n_a K_a\right) - \sum_{a \in U} n_a K_a y_{a,\tau} \label{eq:slsa_price1}  \\
&=\sum_{a \in U} -n_a K_a y_{a,\tau}, \label{eq:slsa_price2}
\end{align}
where from \eqref{eq:ls_price2} and the definition of $\mathcal{M}(U)$, \eqref{eq:slsa_price1} follows.  
Its first term vanishes by \eqref{eq:def_sa2}, yielding \eqref{eq:slsa_price2}.

For $t\in\mathbb{N}$, if each component option position of $(SA; U, \textbf{n})$ is built at $o(t)$ and held until its respective expiration date, as outlined  in the introduction of Section~\ref{sec:method}, then the payoff is as follows:
\begin{equation}
    \text{Payoff}_{\tau}{(SA;U,\textbf{n})}=\begin{cases}
    \sum_{a \in U} n_a v_{a,t} &\text{if}~\tau=o(t)\\
    0 &\text{if}~\tau \neq o(t)
    \end{cases}
    \label{eq:payoff_sa1}
\end{equation}
where $v_{a,t}=K_a y_{a,o(t)}$. 
At $o(t)$, we pay $P_{o(t)}(SA; U, \textbf{n})$ for the component options, as derived in \eqref{eq:slsa_price2}.  
For each $M \in \mathcal{M}(U)$, only the options $a \in \{a' \in U : M_{a'} = M\}$ maturing at $M$ yield payoff. 
Thus, the payoff at $M$ is  $\sum_{a \in U : M_a = M} n_a (S_M - K) = (S_M - K) \sum_{a \in U : M_a = M} n_a = 0$, by \eqref{eq:sl_payoff} and \eqref{eq:def_sa1}.  
No payoff occurs at the other time points.  
Thus, the only nonzero net cash flow of $(SA, U, \textbf{n})$ is $-P_{o(t)}(SA; U, \textbf{n})$ at the initial time point $o(t)$. 
Accordingly, allocating large $n_a$ to $a$ with large $v_{a,t}$ can increase profit.

Compared to the synthetic-long-shorts that can contain bonds as shown in \eqref{eq:ls_price2}, SLSA solely comprises arbitrage opportunities since $P_\tau(SA; U, \mathbf{n})$ is a linear combination of arbitrages $y_{a,\tau}$ with the fixed coefficients of $-n_a K_a$ in \eqref{eq:slsa_price2}.
Moreover, if the market is AF, the price of $(SA; U, \mathbf{n})$ vanishes, as shown by the proposition below. 
Hence, we confirm arbitrage opportunities are exclusively contained in SLSA. 

\begin{proposition}
\label{prop:slsa}
Under the AF assumption, the following statements hold for all $\tau \in \mathcal{T}(U)$: \\
\proman{1} $P_\tau(SA;U,\textbf{n})=0$, \\
\proman{2} $Var(P_\tau(SA;U,\textbf{n}))=0$, \\
\proman{3} every SLSA is neutral with respect to all \cite{black1973pricing} risk factors (i.e., the risk-free interest rate, time to maturity, and the price and volatility of its underlying instrument).
\end{proposition}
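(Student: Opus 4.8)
The plan is to establish all three claims by leaning on the already-proven Proposition~\ref{prop:pred_target} together with the closed-form price in \eqref{eq:slsa_price2}. The essential observation is that an SLSA is a synthetic-long-short, so \eqref{eq:slsa_price2} gives $P_\tau(SA;U,\textbf{n})=\sum_{a\in U}-n_a K_a y_{a,\tau}$ for every $\tau\in\mathcal{T}(U)$, and under the AF assumption Proposition~\ref{prop:pred_target} tells us $y_{a,\tau}=0$ for each $a$. Fixing an arbitrary $\tau\in\mathcal{T}(U)$, the proof of \proman{1} is then a one-line substitution; \proman{2} follows immediately since a quantity that is identically zero (hence deterministic) has zero variance.

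For \proman{3} I would first note that every SLSA is in particular a synthetic-long-short, so Proposition~\ref{prop:ls}\proman{1} already gives neutrality with respect to the price and volatility of the underlying instrument. What remains is neutrality with respect to the two time-related Black--Scholes risk factors, namely time to maturity and the risk-free interest rate. Here I would argue directly from \proman{1}: since $P_\tau(SA;U,\textbf{n})=0$ for all $\tau\in\mathcal{T}(U)$ under the AF assumption, the value process is the identically-zero process, which has zero partial derivative (sensitivity) with respect to any parameter, including the passage of time $\tau$ (theta) and the risk-free rate $r_f$ (rho). Equivalently, the value is constant along any perturbation of these factors, so the position carries no exposure to any of them. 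A subtle point worth spelling out: even though Proposition~\ref{prop:ls} needed the extra assumption of a constant risk-free rate, that assumption is \emph{not} needed here, because \proman{1} of the present proposition only uses the existence of a present value function $\Pi_\tau$ (guaranteed by the AF assumption) and the vanishing of $y_{a,\tau}$—the price is zero regardless of whether $r_f$ is constant, so rho-neutrality holds in full generality under AF.

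I would structure the write-up as: fix $\tau\in\mathcal{T}(U)$; invoke \eqref{eq:slsa_price2} and Proposition~\ref{prop:pred_target} to get \proman{1}; deduce \proman{2} from determinism; then for \proman{3} split into the underlying-related factors (cite Proposition~\ref{prop:ls}\proman{1}, noting the constant-rate hypothesis is irrelevant once we know the price is identically zero) and the time/rate factors (differentiate the identically-zero price, or simply note it is invariant to all such perturbations). The main obstacle I anticipate is not any calculation but a \emph{definitional} one: ``neutral with respect to a risk factor'' is used informally in the paper, so I must phrase \proman{3} in a way that is rigorous yet consistent with how Proposition~\ref{prop:ls}\proman{1} was stated and proved—most safely by saying the price process, being identically zero on $\mathcal{T}(U)$, is deterministic and independent of every underlying state process and is constant under perturbations of time and interest rate, hence has zero sensitivity (Greek) to each of the four Black--Scholes factors.
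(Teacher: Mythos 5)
Your proposal is correct and follows essentially the same route as the paper: obtain $P_\tau(SA;U,\textbf{n})=0$ from \eqref{eq:slsa_price2} together with Proposition~\ref{prop:pred_target}, then deduce zero variance and neutrality to all \cite{black1973pricing} risk factors from the fact that the price process is identically zero, hence deterministic and independent of every risk factor. Your detour through Proposition~\ref{prop:ls}\proman{1} for the price/volatility factors is unnecessary (and would otherwise smuggle in the constant-rate hypothesis), but you correctly observe that the identically-zero price already yields neutrality to all four factors, which is exactly the paper's argument.
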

\begin{proof}{Proof}
Proposition \ref{prop:pred_target} and  \eqref{eq:slsa_price2} yield $P_\tau(SA;U,\textbf{n})=0$ for all $\tau\in\mathcal{T}(U)$.
Thus, process $\{P_{\tau}(SA; U, \mathbf{n})\}_{\tau \in \mathcal{T}(U)}$ is stochastically independent of all \cite{black1973pricing} risk factors. 
Hence, every SLSA is neutral to all \cite{black1973pricing} risk factors.  
As $P_\tau(SA;U,\textbf{n})$ is a constant, its variance is zero for all $\tau\in\mathcal{T}(U)$. \Halmos
\end{proof}

SLSA exhibits the lowest risk level among the trading strategies developed in the past under the AF assumption since SLSA has zero variance and is neutral with respect to all the \cite{black1973pricing} risk factors, as shown in Proposition \ref{prop:slsa}.
Consequently, SLSA solves the problem of exposure to time in Proposition \ref{prop:ls}. 
In addition, Proposition \ref{prop:slsa} does not assume a constant risk-free interest rate, compared to Proposition \ref{prop:ls}. 
SLSA resolves the limitations of the synthetic-long-short. 

Although the AF assumption in Proposition \ref{prop:slsa} may fail to hold in reality, the AF assumption is fundamental to rational decision-making in economics \citep{nau1991arbitrage}.
Moreover, as arbitrage precludes the existence of a present-value function \citep{skiadas2024theoretical}, trading strategies derived in the extensive literature employing the present-value function assume an AF market. 
While real-world markets may contain arbitrages, they are expected to gravitate toward an AF state \citep{varian1987arbitrage,langenohl2018sources}, rendering SLSA relatively low-risk.

In light of the desirable properties established in Proposition \ref{prop:slsa}, we adopt SLSA for our trading positions.
We next detail the methodology for determining $\textbf{n}$ in $(SA; U, \textbf{n})$.

\textbf{Synthetic-Long-Short-Arbitrage Projection.}
For $t\in\mathbb{N}$, given the predicted values $\hat{y}_{a,o(t)}$ for $y_{a,o(t)}$ defined in~\eqref{eq:pred_target}, we construct our position of a SLSA $(SA;\mathcal{U}_t,\textbf{n}_{t})$ as follows:
\begin{equation}
\textbf{n}_t = \text{Proj}_{\text{Null}(A_t)}\hat{\textbf{v}}_{t}= N_t(N_t^TN_t)^{-1}N_t^T \hat{\textbf{v}}_{t} \label{eq:proj_slsa1}
\end{equation}  
where $\textbf{n}_t =[n_{a,t}]_{a\in \mathcal{U}_t} \in \mathbb{R}^{|\mathcal{U}_t|}$.
The matrix $A_t$ represents the constraints in~\eqref{eq:def_sa1} and~\eqref{eq:def_sa2}, and $N_t$ consists of orthonormal columns spanning $\text{Null}(A_t)$. 
Then, the projection matrix is $N_t(N_t^TN_t)^{-1}N_t^T$. 
A row-stacked vector $\hat{\textbf{v}}_{t} = [\hat{v}_{a,t}]_{a \in \mathcal{U}_t} \in \mathbb{R}^{|\mathcal{U}_t|}$ consists of predicted values $\hat{v}_{a,t} \in \mathbb{R}$ for $v_{a,t}$, computed as $\hat{v}_{a,t} = K_a \hat{y}_{a,o(t)}$, where $\hat{y}_{a,o(t)} \in \mathbb{R}$ is the prediction for $y_{a,o(t)}$ defined in~\eqref{eq:pred_target}.

To construct an SLSA position satisfying \eqref{eq:def_sa1} and~\eqref{eq:def_sa2} while maximizing expected profit with consideration of prediction risk, we formulate our position as in \eqref{eq:proj_slsa1}.
Since $\textbf{n}_t$ is the projection onto $\text{Null}(A_t)$, \eqref{eq:def_sa1} and~\eqref{eq:def_sa2} are automatically satisfied. 
As shown in \eqref{eq:payoff_sa1}, profit increases with greater allocation to assets $a$ with larger $v_{a,t}$.  
We thus define the initial (non-SLSA) position as $\hat{\textbf{v}}_{t}$, assigning positions in proportion to the predicted values $\hat{v}_{a,t}$. 
Since these positions are based on predicted values $\hat{v}_{a,t}$ rather than the ground-truth $v_{a,t}$, we avoid concentrating our positions in assets with the most extreme predicted payoffs.  
That is, we mitigate prediction risk arising from prediction errors by avoiding concentration in such extreme assets.
Lastly, since projection matrices are positive semidefinite, \eqref{eq:proj_slsa1} ensures a non-negative aggregate predicted payoff, i.e., $\hat{\textbf{v}}^T_{t}\textbf{n}_t \geq 0$.

\section{Experiments}
\label{sec:experiments}
In this section, we show that RNConv statistically significantly outperforms benchmarks, and that the P\&L of SLSA projection positions exhibits a consistently upward trend. 
We begin by delineating the experimental setup, followed by the results of universe selection, RNConv, and SLSA.

\subsection{Experimental Setup}
We utilized historical data on KOSPI 200 and its options spanning from July 7th, 1997 to December 30th, 2024, which are available at \cite{krx_data_2025,krx_global_2025}.
The data include detailed option information (e.g., option type, strike price) as well as historical time-series data of the close, open, high, and low prices for each option and for the KOSPI 200 index itself.
We selected the KOSPI 200 options primarily because the Korea Exchange---a reliable data provider---offers relatively long-term historical data free of charge.
Following the data splits in \ssref{subsec:setup_notation},
we set $t_{fit,1}$ as the first trading date of 2015 to ensure a few thousand training graphs, and we defined $\mathcal{T}_{fit}=\{t_{fit,1}, \dots, t_{fit,n_{fit}}\}$ as the first trading dates of each quarter (i.e., $\vert \mathcal{T}_{fit} \vert = 40$), and set $p_{val} = 0.2$.

We implemented our method and conducted the experiments in \texttt{Python} and \texttt{R}, using \texttt{PyTorch}, \texttt{PyTorch-Geometric}, \texttt{scikit-learn}, and \texttt{lawstat} libraries, with the following  values for the hyperparameters.  
We set $p_{dg} = 1/3$ so that, with at least three GNN layers, each node can aggregate information from all others. 
In \eqref{eq:rdodt4}, $VBN(\cdot)$ is implemented in \texttt{PyTorch} with \texttt{affine=False}. 
In \pref{eq:rdodt4}, we set $\alpha = 1.5$  following~\citep{popov2019neural}, and $\gamma = 5$ to approximate a 20\% probability that $\sigma_\alpha(Z) \in [0.1, 0.9]$, assuming $Z \sim \mathcal{N}(0,1)$ and applying Pareto's rule. 
In \pref{eq:rdodt5}, $\text{MLP}(\cdot)$ has three layers, where the first two hidden layers have $p/4$ neurons, as in $CrossNet$. 
Leaky ReLU is used as the activation in both $\text{MLP}(\cdot)$ and $\text{CrossNet}(\cdot)$.
We set $l_{CN}=2$ in \pref{eq:crossnet_rnode} and $p_{CN}=p/4$ in \pref{eq:crossnet2}  and \pref{eq:original_dcn}, following~\citet{wang2021dcn}. 
Finally, we set $q_1=0.5$ in \eqref{eq:rnc_layer1} for the interim layers and  $q_2 = 0.2$ in \eqref{eq:rnode1} for the input layer, following \citet{srivastava2014dropout}.








\subsection{Trading Universe Selection Model}
\label{subsec:exp_universe}

We employed a radius neighbor classifier to predict $\hat{\mu}_{tr,a,t}$ in \eqref{univ:obj}--\eqref{univ:end}, using \texttt{scikit-learn} and scaled\footnote{Each feature was first transformed to follow a uniform distribution via the probability integral transformation, and then mapped to a standard normal distribution using the inverse cumulative distribution. The same scaling method was applied to the arbitrage prediction inputs and outputs.} features of moneyness ($S_{t-1,c}/K_a$) and time to maturity ($M_a - t$) in $\textbf{x}_{tr,a,t-1}$.
For each $\mathcal{D}_i^{tr}$ from \sref{subsec:setup_notation}, 50 radius neighbor classifiers were trained with radii linearly spaced between 0.0 and 0.1 (excluding 0.0). 
The best model $\hat{f}^{tr}_{i}$ was selected based on $\mathcal{D}_{val,i}^{tr}$ and predicted $\hat{\mu}_{tr,a,t}$ on $\mathcal{D}_{test,i}^{tr}$. 
If no neighbors \edit{exist} within a radius, 5-nearest neighbors \edit{are} used. 
We also evaluated the best model $\hat{f}^{tr}_{i}$ on $\mathcal{D}_{test,i}^{tr}$, which are summarized in \tref{tab:tradability_stats}. 

We solved \eqref{univ:obj}--\eqref{univ:end} for $p_{univ} \in \{16, 24, 32\}$ utilizing \texttt{PuLP} and \texttt{Gurobi}. 
When infeasible, the largest feasible $p_{univ}$ was selected. 
For example, if \eqref{univ:obj}--\eqref{univ:end} \edit{are} infeasible for $p_{univ} = 24$, we gradually \edit{reduce} $p_{univ}$ (i.e., 23, 22, ...) until a feasible solution \edit{is} found. 
The resulting universe cardinalities over time are shown in \fref{fig:universe_time_series}. 
The KOSPI 200 options market was inaugurated in 1997. 
In its early years, option trading increased gradually, resulting in the upward trend in \fref{fig:universe_time_series}.
Subsequently, a sufficient number of options were traded for $p_{univ} \in \{16, 24\}$, while trading volume remained insufficient for $p_{univ} = 32$, as indicated by the fluctuations observed from 2016 to 2024.

\begin{table}[thbp]
\centering
\caption{Descriptive Statistics of Model Evaluation Metrics for Tradability Prediction (2015 Q1–2024 Q4)}
\label{tab:tradability_stats}
\begin{tabular}{crrrrrr}
\toprule
 & Accuracy & Precision & Recall & F1 & ROC AUC \\
\midrule
Mean & 76.939 & 77.597 & 97.867 & 86.531 & 83.803 \\
Std & 2.842 & 2.830 & 1.538 & 1.916 & 2.699 \\
Min & 71.601 & 71.424 & 93.401 & 82.890 & 78.602 \\
Max & 85.208 & 85.359 & 99.767 & 91.990 & 89.370 \\
\bottomrule
\end{tabular}
\end{table}


\begin{figure*}[thbp] 
    \centering
    \includegraphics[width=1.0\textwidth]{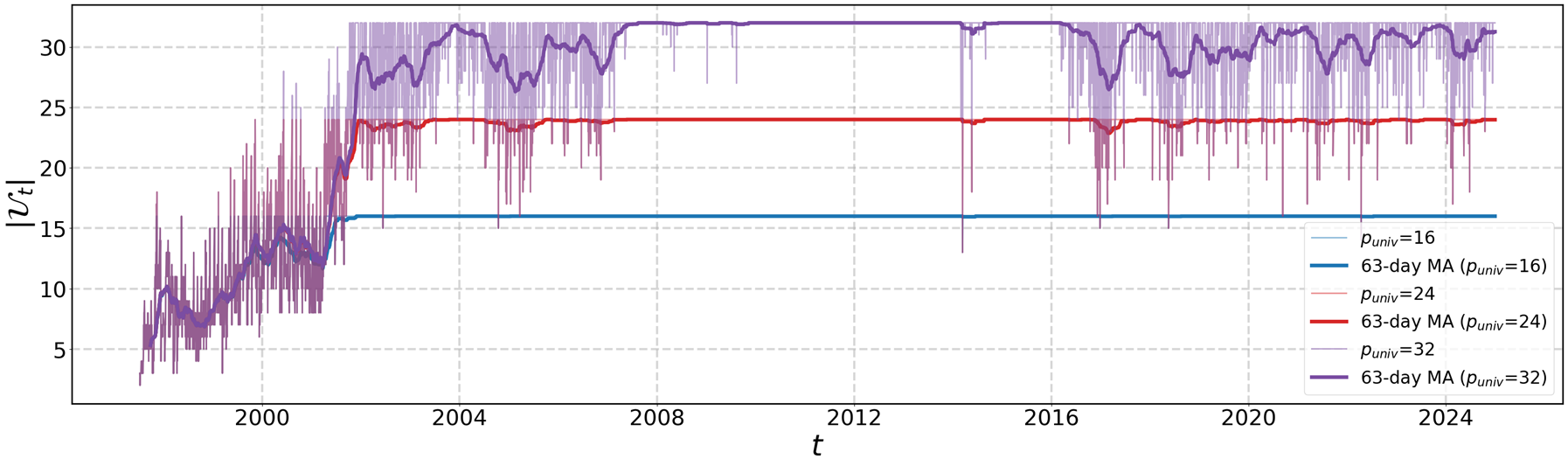}
    \caption{Time Series of $\vert \mathcal{U}_t \vert$. \textnormal{MA stands for the moving average.}}
    \label{fig:universe_time_series}
\end{figure*}

\subsection{Arbitrage Prediction}
\label{subsec:arb_pred}

\textbf{Benchmarks.} 
Given a node feature $\textbf{x}\in\mathbb{R}^p$ and graph $G$, we considered the following architecture for the benchmark convolution methods to evaluate the performance of RNConv:
\begin{equation}
    f^{BM}(\textbf{x},G;\theta) = \operatorname{Head}(\textbf{z}_{l_{BM}}) \in \mathbb{R} \label{eq:bm1}
\end{equation}
where for $l=1,2,...,l_{BM},$
\begin{equation}    \textbf{z}_l=\operatorname{LeakyReLU}\left(g^{BM}_l(\textbf{z}_{l-1},G;\theta_l)\right). \label{eq:bm2}
\end{equation}
Function Head is a linear layer without activation, and $g_l^{\mathrm{BM}}$ denotes a benchmark graph convolution with parameters $\theta_l$.  
We used GCN, GAT, SAGE, and GPS as instances of $g_l^{\mathrm{BM}}$, implemented using \texttt{GCNConv}, \texttt{GATConv}, \texttt{SAGEConv}, and \texttt{GPSConv} from the \texttt{PyTorch-Geometric} library.\footnote{
Additionally, parametric models (e.g., \cite{black1973pricing}, \cite{heston1993closed}) based on the AF assumption or present-value functions are unsuitable as benchmarks, since they predict $y_{a,t} = 0$ for all $a$ and $t$, as shown in Proposition~\ref{prop:pred_target}.
}

Since deep learning model performance depends on the number of layers and parameters, we performed a grid search over these hyperparameters using $\mathcal{D}^{ar}_{train,i},\mathcal{D}^{ar}_{val,i}$ in each period indexed by $t_i\in\mathcal{T}_{fit}$ for each $p_{univ}$ and for each graph convolution architecture. 
In this way, we obtained the best model $\hat{f}_i^{m}$ for each $i,p_{univ},m$. 
Lastly, we computed $\hat{y}_{a,t}$ and the mean squared error (MSE) of $\hat{f}_i^{m}$ on  $\mathcal{D}^{ar}_{test,i}$ for each $i,p_{univ},m$. 

We considered the following hyperparameter grids for $l_{BM}$ and the number of parameters: \edit{$\{3,4,5\}\times\{10^5,~5\cdot10^5,~10^6\}$.} 
However, as setting the number of parameters as \edit{$10^5$,~$5\cdot10^5$, or $10^6$} is infeasible in many cases, we considered the following to set the number of hyperparameters close to these numbers: $p_{ctr} \in \arg\min_{p'_{ctr}} \left| \#\text{params}(f^{m}(\textbf{x}, G; \theta(p'_{ctr}))) - n^* \right|$
where $n^*$ is the target number of parameters (e.g., \edit{$10^5$}), and $p_{ctr}$ is a hyperparameter controlling the number of the parameters of $\theta(p_{ctr})$.\footnote{ 
For RNC, $p_{ctr}$ is the number of trees of an RNC layer, which is the same for all the layers.  
For GCN, SAGE, GAT, and GPS, $p_{ctr}$ is the number of their hidden neurons---specifically, \texttt{out\_channels} and \texttt{channels} in \texttt{Pytorch-Geometric}. 
The other hyperparameters were set as the values suggested in the underlying articles.}

\textbf{Experimental Results.} 
Model performance was evaluated using the MSEs on $\mathcal{D}^{ar}_{\text{test},i}$ for all $i$. 
As shown in \tref{tab:avg_mse}, our method consistently outperforms the benchmarks in terms of average MSEs across all $p_{univ}$ values.  
The MSE magnitudes are low enough to yield profits, which we discuss in \sref{subsec:exp_arbitrage_strategies}. 
Due to the nature of the prediction target, the MSEs are significantly influenced by unobserved external factors: \fref{fig:time_series_avg_mse} shows the average with respect to $p_{univ}$. The zoom-in box shows the values for all five methods. 
For example, the early 2020 spike corresponds to market disruptions from COVID-19, whose unprecedented dynamics~\citep{baker2020unprecedented} were not reflected in the training data or features. 

We conducted statistical tests on the MSE differences between our method and each benchmark, similarly to \cite{wu2021machine}. 
Since the average MSEs in \tref{tab:avg_mse} might be distorted by large values as illustrated in \fref{fig:time_series_avg_mse},  
\tref{tab:hypotheses} summarizes several tests applied to the differences $\hat{\sigma}_{bm}^2-\hat{\sigma}_{RNC}^2$ between the MSEs of RNConv and a benchmark for the same $i,p_{univ}$ pairs. 

For all values of $p_{univ}$, our method statistically significantly outperforms all benchmarks, as summarized in \tref{tab:p_values_metrics}.  
Bold values indicate p-values for which the null hypothesis in the difference comparison is rejected under a significance level of 0.05, provided that the assumptions of the corresponding statistical tests are satisfied.

\begin{table}[thbp]
\centering
\caption{Averages MSEs over time (in $10^{-6}$)}
\label{tab:avg_mse}
\begin{tabular}{lrrrrr}
\toprule
 $p_{univ}$ & GAT & GCN & GPS & SAGE & RNConv \\
\midrule
16 & 6.3618 & 6.3361 & 6.3748 & 6.3343 & \textbf{6.3254} \\
24 & 8.5003 & 8.4995 & 8.5042 & 8.5027 & \textbf{8.4767} \\
32 & 9.5219 & 9.5234 & 9.5407 & 9.5440 & \textbf{9.5034} \\
\bottomrule
\end{tabular}
\end{table}

\begin{figure*}[thbp] 
    \centering
    \includegraphics[width=1.0\textwidth]{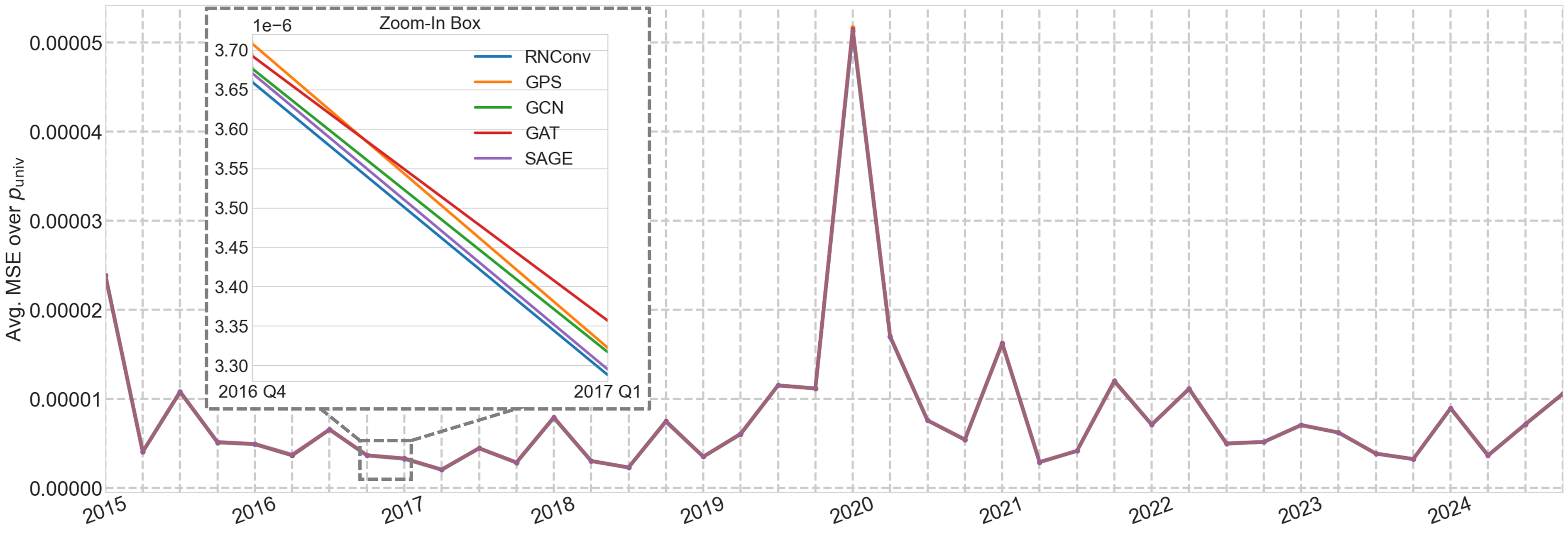}
    \caption{Time Series of Avg. MSE over $p_{univ}$ \edit{for RNConv}}
    \label{fig:time_series_avg_mse}
\end{figure*}

\begin{table}[htbp]
\centering
\caption{Hypotheses of \tref{tab:p_values_metrics}. \textnormal{$\hat{\sigma}_{bm}^2$ and $\hat{\sigma}_{RNC}^2$} denote the MSEs of a benchmark and RNConv, respectively.}
\label{tab:hypotheses}
\begin{tabularx}{\linewidth}{@{}lX X@{}}
\toprule
\textbf{Test} & \textbf{Null Hypothesis ($H_0$)} & \textbf{Alternative Hypothesis ($H_1$)} \\
\midrule
Paired T & $\text{mean}(\hat{\sigma}^2_{\text{bm}} - \hat{\sigma}^2_{\text{RNC}}) \leq 0$ & $\text{mean}(\hat{\sigma}^2_{\text{bm}} - \hat{\sigma}^2_{\text{RNC}}) > 0$ \\
Wilcoxon Signed-Rank  & $\text{median}(\hat{\sigma}^2_{\text{bm}} - \hat{\sigma}^2_{\text{RNC}}) \leq 0$ & $\text{median}(\hat{\sigma}^2_{\text{bm}} - \hat{\sigma}^2_{\text{RNC}}) > 0$ \\
Permutation  & $\text{mean}(\hat{\sigma}^2_{\text{bm}} - \hat{\sigma}^2_{\text{RNC}}) = 0$ & $\text{mean}(\hat{\sigma}^2_{\text{bm}} - \hat{\sigma}^2_{\text{RNC}}) > 0$ \\
Shapiro-Wilk  & $\hat{\sigma}^2_{\text{bm}} - \hat{\sigma}^2_{\text{RNC}} \sim \mathcal{N}(\mu, \sigma^2)$ & $\hat{\sigma}^2_{\text{bm}} - \hat{\sigma}^2_{\text{RNC}} \not\sim \mathcal{N}(\mu, \sigma^2)$ \\
Kolmogorov-Smirnov  & $\hat{\sigma}^2_{\text{bm}} - \hat{\sigma}^2_{\text{RNC}} \sim \mathcal{N}(\hat{\mu}, \hat{\sigma}^2)$ & $\hat{\sigma}^2_{\text{bm}} - \hat{\sigma}^2_{\text{RNC}} \not\sim \mathcal{N}(\hat{\mu}, \hat{\sigma}^2)$ \\
\makecell[l]{Symmetry\\\citep{miao2006new}} & $\hat{\sigma}^2_{\text{bm}} - \hat{\sigma}^2_{\text{RNC}}$ is symmetric 
& $\hat{\sigma}^2_{\text{bm}} - \hat{\sigma}^2_{\text{RNC}}$ is not symmetric 
\\
\bottomrule
\end{tabularx}
\end{table}
\begin{table}[thbp]
\centering
\caption{P-values from Statistical Tests on MSE Differences 
}
\label{tab:p_values_metrics}
\begin{tabular}{llcccccc}
\toprule
$p_{univ}$ & Benchmark & Paired T & \makecell{Wilcoxon\\Signed-Rank} & Permutation & \makecell{Shapiro-\\Wilk} & \makecell{Kolmogorov-\\Smirnov} & Symmetry \\
\midrule
\multirow[t]{4}{*}{16} & GAT & 0.0659 & \textbf{0.0001} & 0.0001 & 0.0000 & 0.0000 & 0.1462 \\
 & GCN & \textbf{0.0009} & 0.0013 & 0.0009 & 0.9718 & 0.8945 & 0.9642 \\
 & GPS & 0.1278 & 0.0809 & \textbf{0.0244} & 0.0000 & 0.0000 & 0.0352 \\
 & SAGE & \textbf{0.0196} & 0.0093 & 0.0195 & 0.2657 & 0.9584 & 0.6572 \\
\cline{1-8}
\multirow[t]{4}{*}{24} & GAT & 0.0001 & \textbf{0.0000} & 0.0001 & 0.0116 & 0.2810 & 0.1244 \\
 & GCN & 0.0005 & \textbf{0.0000} & 0.0005 & 0.0113 & 0.3573 & 0.5212 \\
 & GPS & 0.0073 & 0.0030 & \textbf{0.0065} & 0.0000 & 0.0344 & 0.0132 \\
 & SAGE & 0.0003 & 0.0000 & \textbf{0.0001} & 0.0010 & 0.0813 & 0.0244 \\
\cline{1-8}
\multirow[t]{4}{*}{32} & GAT & \textbf{0.0030} & 0.0028 & 0.0027 & 0.1473 & 0.5161 & 0.4722 \\
 & GCN & \textbf{0.0013} & 0.0006 & 0.0010 & 0.0821 & 0.5675 & 0.7320 \\
 & GPS & 0.0001 & 0.0000 & \textbf{0.0001} & 0.0017 & 0.1606 & 0.0372 \\
 & SAGE & 0.0000 & 0.0000 & \textbf{0.0000} & 0.0000 & 0.0519 & 0.0134 \\
\bottomrule
\end{tabular}
\end{table}

\subsection{Synthetic-Long-Short Arbitrage}
\label{subsec:exp_arbitrage_strategies}

\textbf{Benchmarks.}
To evaluate the performance of our method, projected SLSA $(SA;\mathcal{U}_t,\textbf{n}_{t})$, we consider the following two benchmark strategies:
\begin{itemize}
    \item Benchmark 1: positions satisfying \eqref{eq:def_sa1}, 
    \item Benchmark 2: positions satisfying \eqref{eq:def_ls} with $m=0$,
\end{itemize}
where, in both cases, the positions are the projections of $\hat{\textbf{v}}_t \in \mathbb{R}^{|\mathcal{U}_t|}$ onto their respective constraint sets, analogous to the SLSA projection.
Unlike SLSA, Benchmark 1 may violate \eqref{eq:def_sa2}, and Benchmark 2 may fail to satisfy both \eqref{eq:def_sa1} and \eqref{eq:def_sa2}, thus functioning as relaxed ablation variants.

Since our prediction target is newly proposed in this paper and therefore lacks an established benchmark strategy for position-taking, we ad hoc adapted the strategies of \cite{wang2024considering} and \cite{wang2024deep} as Benchmark 1 and 2, respectively.
In evaluating prediction models for individual put and call option pricing, \cite{wang2024considering} considered symmetric-long-short positions within the same-maturity options, while \cite{wang2024deep} considered them for all options.  
Equation \eqref{eq:def_sa1} implements the former, whereas \eqref{eq:def_ls} with $m = 0$ realizes the latter.
Here, symmetric-long-short positions refer to equal numbers of long and short contracts.

Compared to SLSA's payoff \eqref{eq:payoff_sa1}, the benchmarks can have nonzero payoffs at maturity $M\in\mathcal{M}(\mathcal{U}_t) = \{M_a : a \in \mathcal{U}_t\}$. 
Their net payoffs at $M\in\mathcal{M}(\mathcal{U}_t)$ are as follows: 
\begin{align}
    \text{Payoff}_M(BM_1; \mathcal{U}_t, \textbf{n}) &= \sum_{a \in \mathcal{U}_t : M_a = M} -n_a K_a \label{eq:maturity_payoff_bm1} \\
    \text{Payoff}_M(BM_2; \mathcal{U}_t, \textbf{n}) &= \sum_{a \in \mathcal{U}_t : M_a = M} n_a (S_M - K_a) \label{eq:maturity_payoff_bm2}
\end{align}
while both of their inception payoffs are 
\begin{equation}
    \sum_{a \in \mathcal{U}_t} n_a K_a \left( y_{a,o(t)}+\bar{\delta}_{M_a,o(t)} \right) \label{eq:inception_payoff_bm2}
\end{equation}
from \eqref{eq:ls_price2}, and all the payoffs at the other time points are zero. 
Both \eqref{eq:maturity_payoff_bm1} and \eqref{eq:maturity_payoff_bm2} are derived from \eqref{eq:sl_payoff}. 
Since Benchmark 1 satisfies \eqref{eq:def_sa1}, the terms of $S_M$ cancel out.

Compared to SLSA, both benchmarks can contain synthetic bonds, and Benchmark 1  can additionally hold the underlying instrument. 
In \eqref{eq:maturity_payoff_bm1} and \eqref{eq:maturity_payoff_bm2}, nonzero $\sum_{a \in \mathcal{U}_t:M_a=M} n_a K_a$, indicates exposure to synthetic bonds. 
Nonzero $\sum_{a \in \mathcal{U}_t : M_a = M} n_a S_M$ in \eqref{eq:maturity_payoff_bm2} implies a position in the underlying instrument.

The RNConv predictions $\hat{y}_{a,o(t)}$ are used to compute $\hat{v}_{a,t}$, from which the projected SLSA and benchmark positions are derived.  
Each morning, positions totaling one long and one short contract are constructed and held until each option's expiration.
Performance are evaluated across $p_{univ}$.



\textbf{Experimental Results.} 
The cumulative sums of the profits and losses (P\&L) of SLSA show steady upward slopes for all $p_{univ}$, as shown in \fref{fig:slsa_pnl}, of which evaluation metrics can be found in \tref{tab:slsa_summary}. 
In \fref{fig:slsa_pnl}, transaction costs are set at 0.09\% of traded monetary values, reflecting the minimum commission rate offered by Miraeasset Securities in Korea. 
In \tref{tab:slsa_summary}, treating $\text{P\&L} / (\text{Number~Of~Contracts})$ as the return, information and Sortino ratios are computed with a zero benchmark return.
HHI denotes the Herfindahl-Hirschman Index, a measure of position concentration. 
The effective N is the reciprocal of the HHI and reflects how many options are effectively held, based on the distribution of positions.
The values of $avg \vert S-K\vert$ and $avg \vert M-t \vert$ represent the averages of $\sum_{a\in\mathcal{U}_t} n_{a,t} \vert S_t - K_a \vert$ and $\sum_{a\in\mathcal{U}_t} n_{a,t} \vert M_a - t \vert$, respectively.
These indicate how much the strategy allocates to deeper ITM or OTM options and to longer-term maturities. 
Since larger $p_{univ}$ invests in deeper-OTM-ITM options, as shown in $avg \vert S-K\vert$ in \tref{tab:slsa_summary}, and since such options can provide larger arbitrages, as shown in \fref{fig:moneyness_arbitrages}, larger-$p_{univ}$ projected SLSA positions can yield more profits. 
The projected SLSA generates small P\&Ls in 2015, which is, in our opinion, because the projected SLSA position is relatively closer to the signal having relatively high errors in 2015, as shown in Figures \ref{fig:time_series_avg_mse} and \ref{fig:cosine_similarity}. 

\begin{figure*}[thbp] 
    \centering
    \includegraphics[width=1.0\textwidth]{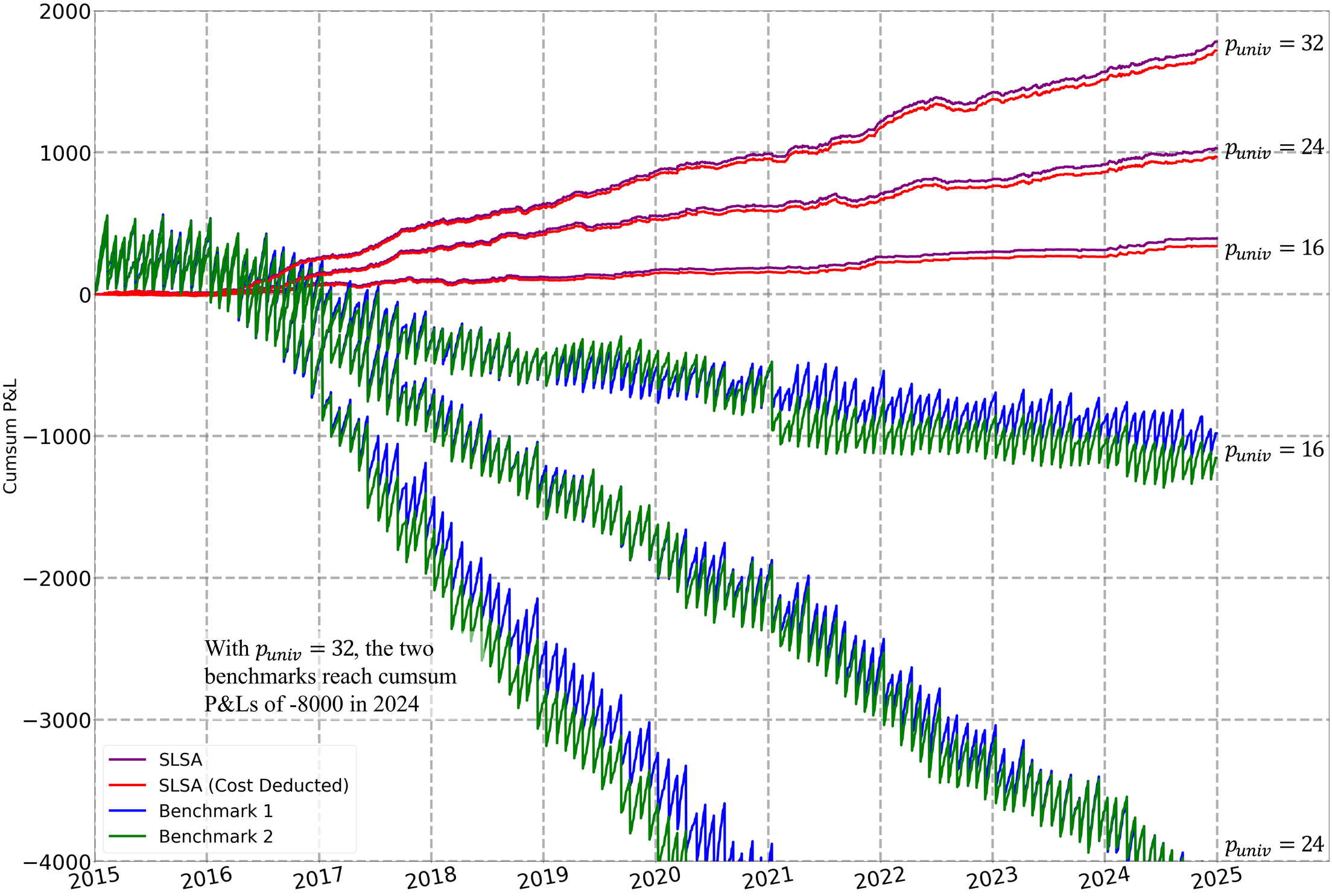}
    \caption{Comparison of Cumulative P\&Ls: Projected SLSA and Benchmark Strategies 
    }
    \label{fig:slsa_pnl}
    \label{fig:bm1_pnl}
    \label{fig:bm2_pnl}
\end{figure*}

\begin{table}[thbp]
\centering
\caption{Averages of Annual Metrics of Projected SLSA 
}
\label{tab:slsa_summary}
\begin{tabular}{lrrr}
\toprule
$p_{univ}$ & 16 & 24 & 32 \\
\midrule
Cumsum P\&L & 39.2621 & 102.8484 & 178.0427 \\
Information Ratio & 0.0938 & 0.1592 & 0.2351 \\
Max Drawdown & 14.4298 & 24.0106 & 23.1582 \\
Sortino Ratio & 0.1796 & 0.2585 & 0.4007 \\
Hit Rate & 0.4841 & 0.5327 & 0.5762 \\
HHI & 0.1033 & 0.0681 & 0.0549 \\
Effective N & 9.9559 & 15.0034 & 18.8209 \\
$avg \vert S-K\vert$ & 8.4020 & 10.3882 & 11.8789 \\
$avg \vert M-t\vert$ & 19.0769 & 23.4991 & 26.3175 \\
\bottomrule
\end{tabular}
\end{table}

\begin{figure*}[thbp] 
    \centering
    \includegraphics[width=1.0\textwidth]{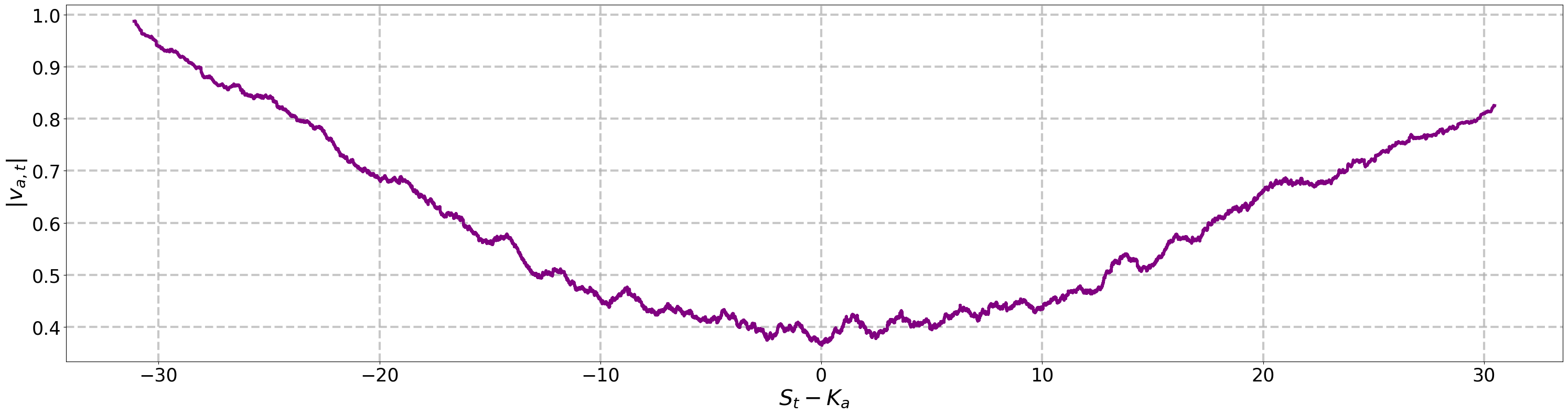}
    \caption{Rolling Average of $|v_{a,t}|$ over $S_t-K_a$}
    \label{fig:moneyness_arbitrages}
\end{figure*}

\begin{figure*}[thbp] 
    \centering
    \includegraphics[width=1.0\textwidth]{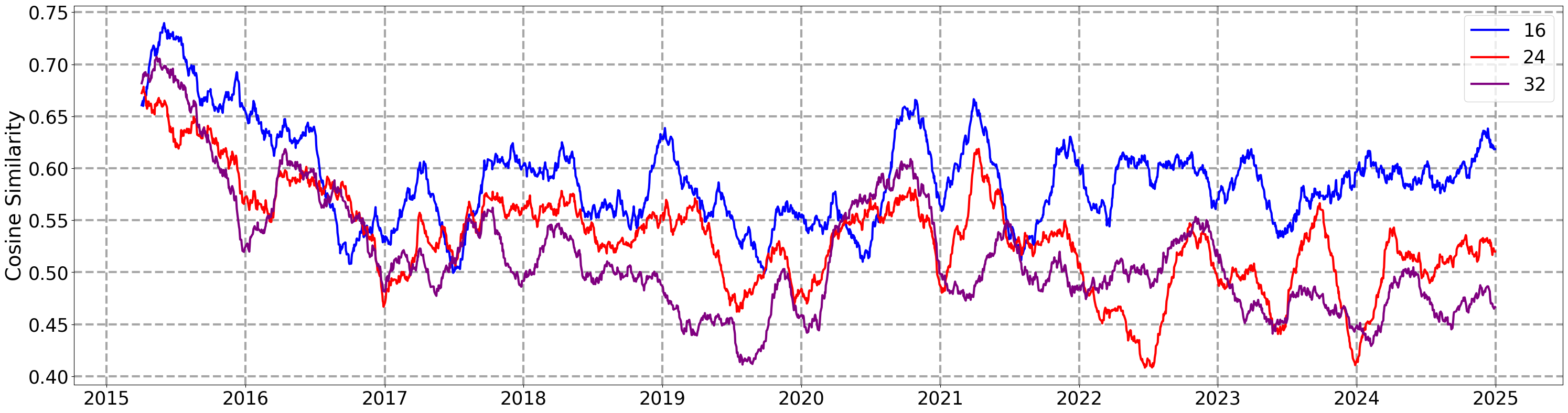}
    \caption{63-day Rolling Mean of Cosine Similarity between Prediction $\hat{\textbf{v}}_{t}$ and Position $\textbf{n}_t$}
    \label{fig:cosine_similarity}
\end{figure*}

\begin{table}[thbp]
\centering
\caption{Correlation between $K$ and $\textbf{n}_t$ 
}
\label{tab:k_position_corr}
\begin{tabular}{lrrr}
\toprule
$p_{univ}$ & Projected SLSA & Benchmark 1 & Benchmark 2 \\
\midrule
16 & 0.0000 & 0.1081 & 0.1085 \\
24 & 0.0000 & 0.1415 & 0.1413 \\
32 & 0.0000 & 0.1636 & 0.1645 \\
\bottomrule
\end{tabular}
\end{table}

Compared to the projected SLSAs, Benchmarks 1 and 2 exhibit steady downward slopes with spikes and drops around maturities. 
Due to the synthetic bond component in \eqref{eq:inception_payoff_bm2}, the benchmarks may generate larger profits on each trading date.  
However, they are also exposed to the cash flows required to clear these bonds at each maturity, as described in \eqref{eq:maturity_payoff_bm1} and \eqref{eq:maturity_payoff_bm2}.  
Since the positions generated by Benchmarks 1 and 2 are positively correlated with strike prices $K$, as shown in \tref{tab:k_position_corr}, the maturity cash flows (\eqref{eq:maturity_payoff_bm1} and \eqref{eq:maturity_payoff_bm2}) become negative in many cases, leading to the observed drops on maturities. 
From these results, we confirm the necessity of \eqref{eq:def_sa1} and \eqref{eq:def_sa2}.







\section{Conclusion}
\label{sec:conclusion}
This paper presents a two-step graph learning framework to exploit StatArbs in options markets, addressing two key gaps in the literature: \proman{1} the lack of direct deep learning methods for StatArbs in options markets, and \proman{2} overlooking the tabular form of the features commonly seen in practice.
In the first step, we present a prediction target that provably contains pure arbitrages. 
To predict this, we introduce RNConv, a novel architecture that integrates tree-based structures with graph convolution to effectively utilize tabular node features.
In the second step, we propose SLSA, a class of synthetic long positions theoretically shown to contain only arbitrage opportunities. 
We further prove that SLSA has the lowest risk level and possesses neutrality to all \cite{black1973pricing} risk factors under the arbitrage-free assumption. 
Moreover, we introduce the SLSA projection, which maps model predictions to SLSA positions capable of realizing StatArbs.
Experimental results show that RNConv statistically significantly outperforms the benchmarks, and SLSA positions derived from RNConv predictions yield consistent, upward-sloping cumulative P\&L curves, achieving an average information ratio of 0.1627 based on P\&L-contract returns.
Therefore, we offer a perspective on the prediction target and strategy for capitalizing on StatArbs in options markets within the context of deep learning in tandem with pioneering tree-based graph learning. 

Nevertheless, there remains room for future work. 
First, the underlying causes of the arbitrages within our prediction target need to be elucidated. 
Moreover, such an investigation may lead to more effective predictive features.  
Incorporating  edge features into our method may further enhance predictive power. 
Next, since fractional contracts cannot be traded in practice, methods for transforming them into integer-valued positions warrant further study. 
Lastly, depending on the investor’s circumstances (e.g., budget), more realistic constraints—such as limiting the investment to a smaller universe—may need to be considered, which is left for future work.




\bibliographystyle{informs2014} 
\bibliography{sections/08.references} 

\begin{thebibliography}{78}
\providecommand{\natexlab}[1]{#1}
\providecommand{\url}[1]{\texttt{#1}}
\providecommand{\urlprefix}{URL }

\bibitem[{Ali \protect\BIBand{} Hirshleifer(2020)}]{ali2020shared}
Ali U, Hirshleifer D (2020) Shared analyst coverage: Unifying momentum spillover effects. \emph{Journal of Financial Economics} 136(3):649--675.

\bibitem[{Alonso(2025)}]{alonso2024look}
Alonso NI (2025) Look-ahead bias in large language models {(LLMs)}: Implications and applications in finance. {Preprint}, submitted January 10, \url{https://dx.doi.org/10.2139/ssrn.5022165}.

\bibitem[{Anand \protect\BIBand{} Chakravarty(2007)}]{anand2007stealth}
Anand A, Chakravarty S (2007) Stealth trading in options markets. \emph{Journal of Financial and Quantitative Analysis} 42(1):167--187.

\bibitem[{Anders et~al.(1998)Anders, Korn, \protect\BIBand{} Schmitt}]{anders1998improving}
Anders U, Korn O, Schmitt C (1998) Improving the pricing of options: A neural network approach. \emph{Journal of Forecasting} 17(5--6):369--388.

\bibitem[{Baker et~al.(2020)Baker, Bloom, Davis, Kost, Sammon, \protect\BIBand{} Viratyosin}]{baker2020unprecedented}
Baker SR, Bloom N, Davis SJ, Kost K, Sammon M, Viratyosin T (2020) The unprecedented stock market reaction to {COVID}-19. \emph{The Review of Asset Pricing Studies} 10(4):742--758.

\bibitem[{Becker et~al.(2019)Becker, Cheridito, \protect\BIBand{} Jentzen}]{becker2019deep}
Becker S, Cheridito P, Jentzen A (2019) Deep optimal stopping. \emph{Journal of Machine Learning Research} 20(74):1--25.

\bibitem[{Beutel et~al.(2018)Beutel, Covington, Jain, Xu, Li, Gatto, \protect\BIBand{} Chi}]{beutel2018latent}
Beutel A, Covington P, Jain S, Xu C, Li J, Gatto V, Chi EH (2018) Latent cross: Making use of context in recurrent recommender systems. \emph{Proceedings of the Eleventh ACM International Conference on Web Search and Data Mining}, 46--54 (ACM, New York, NY, USA).

\bibitem[{Bhatti et~al.(2023)Bhatti, Tang, Wu, Marjan, \protect\BIBand{} Hussain}]{bhatti2023deep}
Bhatti UA, Tang H, Wu G, Marjan S, Hussain A (2023) Deep learning with graph convolutional networks: An overview and latest applications in computational intelligence. \emph{International Journal of Intelligent Systems} 2023(1):8342104.

\bibitem[{Black \protect\BIBand{} Scholes(1973)}]{black1973pricing}
Black F, Scholes M (1973) The pricing of options and corporate liabilities. \emph{Journal of Political Economy} 81(3):637--654.

\bibitem[{Blasco et~al.(2010)Blasco, Corredor, \protect\BIBand{} Santamar{\'\i}a}]{blasco2010does}
Blasco N, Corredor P, Santamar{\'\i}a R (2010) Does informed trading occur in the options market? {Some} revealing clues. \emph{Accounting \& Finance} 50(3):555--579.

\bibitem[{Borisov et~al.(2024)Borisov, Leemann, Seßler, Haug, Pawelczyk, \protect\BIBand{} Kasneci}]{borisov2022deep}
Borisov V, Leemann T, Seßler K, Haug J, Pawelczyk M, Kasneci G (2024) Deep neural networks and tabular data: A survey. \emph{IEEE Transactions on Neural Networks and Learning Systems} 35(6):7499--7519.

\bibitem[{Buehler et~al.(2019)Buehler, Gonon, Teichmann, \protect\BIBand{} Wood}]{buehler2019deep}
Buehler H, Gonon L, Teichmann J, Wood B (2019) Deep hedging. \emph{Quantitative Finance} 19(8):1271--1291.

\bibitem[{Chen \protect\BIBand{} Sutcliffe(2012)}]{chen2012pricing}
Chen F, Sutcliffe C (2012) Pricing and hedging short sterling options using neural networks. \emph{Intelligent Systems in Accounting, Finance and Management} 19(2):128--149.

\bibitem[{Chen et~al.(2022)Chen, Wang, \protect\BIBand{} Xu}]{chen2022gc}
Chen J, Wang X, Xu X (2022) {GC-LSTM}: Graph convolution embedded {LSTM} for dynamic network link prediction. \emph{Applied Intelligence} 52(7):7513--7528.

\bibitem[{Chen \protect\BIBand{} Guestrin(2016)}]{chen2016xgboost}
Chen T, Guestrin C (2016) {XGBoost}: A scalable tree boosting system. \emph{Proceedings of the 22nd ACM SIGKDD International Conference on Knowledge Discovery and Data Mining}, 785--794 (ACM, New York, NY, USA).

\bibitem[{Choi et~al.(2004)Choi, Lee, Han, \protect\BIBand{} Lee}]{choi2004efficient}
Choi HJ, Lee HS, Han GS, Lee J (2004) Efficient option pricing via a globally regularized neural network. \emph{Advances in Neural Networks - ISNN 2004}, 988--993 (Springer, Berlin, Heidelberg, Germany).

\bibitem[{Corrado \protect\BIBand{} Su(1996)}]{corrado1996skewness}
Corrado CJ, Su T (1996) Skewness and {K}urtosis in {S\&P} 500 index returns implied by option prices. \emph{Journal of Financial Research} 19(2):175--192.

\bibitem[{Cox et~al.(1979)Cox, Ross, \protect\BIBand{} Rubinstein}]{cox1979option}
Cox JC, Ross SA, Rubinstein M (1979) Option pricing: A simplified approach. \emph{Journal of Financial Economics} 7(3):229--263.

\bibitem[{Dugas et~al.(2009)Dugas, Bengio, B{{\'e}}lisle, Nadeau, \protect\BIBand{} Garcia}]{dugas2009incorporating}
Dugas C, Bengio Y, B{{\'e}}lisle F, Nadeau C, Garcia R (2009) Incorporating functional knowledge in neural networks. \emph{Journal of Machine Learning Research} 10(42):1239--1262.

\bibitem[{Farahani et~al.(2024)Farahani, Babaei, \protect\BIBand{} Esfahani}]{farahani2024black}
Farahani MS, Babaei S, Esfahani A (2024) ``{B}lack-{S}choles-artificial neural network": A novel option pricing model. \emph{International Journal of Financial, Accounting, and Management} 5(4):489--523.

\bibitem[{Fey \protect\BIBand{} Lenssen(2019)}]{fey2019fast}
Fey M, Lenssen JE (2019) Fast graph representation learning with {PyTorch Geometric}. \emph{ICLR 2019 Workshop on Representation Learning on Graphs and Manifolds} (ICLR, Online).

\bibitem[{Fran{\c{c}}ois et~al.(2025)Fran{\c{c}}ois, Gauthier, Godin, \protect\BIBand{} Mendoza}]{franccois2025difference}
Fran{\c{c}}ois P, Gauthier G, Godin F, Mendoza COP (2025) Is the difference between deep hedging and delta hedging a statistical arbitrage? \emph{Finance Research Letters} 73:106590.

\bibitem[{Guijarro-Ordonez et~al.(2022)Guijarro-Ordonez, Pelger, \protect\BIBand{} Zanotti}]{guijarro2021deep}
Guijarro-Ordonez J, Pelger M, Zanotti G (2022) Deep learning statistical arbitrage. {Preprint}, submitted October 7, \url{https://arxiv.org/abs/2106.04028}.

\bibitem[{Hamilton et~al.(2017)Hamilton, Ying, \protect\BIBand{} Leskovec}]{hamilton2017inductive}
Hamilton W, Ying Z, Leskovec J (2017) Inductive representation learning on large graphs. \emph{Proceedings of the 31st International Conference on Neural Information Processing Systems}, 1025--1035 (Curran Associates, Inc., Red Hook, NY, USA).

\bibitem[{Heston(1993)}]{heston1993closed}
Heston SL (1993) A closed-form solution for options with stochastic volatility with applications to bond and currency options. \emph{The Review of Financial Studies} 6(2):327--343.

\bibitem[{Horikawa \protect\BIBand{} Nakagawa(2024)}]{horikawa2024relationship}
Horikawa H, Nakagawa K (2024) Relationship between deep hedging and delta hedging: Leveraging a statistical arbitrage strategy. \emph{Finance Research Letters} 62:105101.

\bibitem[{Huang et~al.(2017)Huang, Liu, Van Der~Maaten, \protect\BIBand{} Weinberger}]{huang2017densely}
Huang G, Liu Z, Van Der~Maaten L, Weinberger KQ (2017) Densely connected convolutional networks. \emph{2017 IEEE Conference on Computer Vision and Pattern Recognition (CVPR)}, 2261--2269 (IEEE, Piscataway, NJ, USA).

\bibitem[{Hull et~al.(2013)Hull, Treepongkaruna, Colwell, Heaney, \protect\BIBand{} Pitt}]{hull2013fundamentals}
Hull J, Treepongkaruna S, Colwell D, Heaney R, Pitt D (2013) \emph{Fundamentals of Futures and Options Markets} (Pearson Higher Education AU, Camberwell, VIC, Australia).

\bibitem[{Ioffe \protect\BIBand{} Szegedy(2015)}]{ioffe2015batch}
Ioffe S, Szegedy C (2015) Batch normalization: Accelerating deep network training by reducing internal covariate shift. \emph{Proceedings of the 32nd International Conference on International Conference on Machine Learning}, 448--456 (ACM, New York, NY, USA).

\bibitem[{Ivașcu(2021)}]{ivașcu2021option}
Ivașcu CF (2021) Option pricing using machine learning. \emph{Expert Systems with Applications} 163:113799.

\bibitem[{Ke et~al.(2017)Ke, Meng, Finley, Wang, Chen, Ma, Ye, \protect\BIBand{} Liu}]{ke2017lightgbm}
Ke G, Meng Q, Finley T, Wang T, Chen W, Ma W, Ye Q, Liu TY (2017) {LightGBM}: A highly efficient gradient boosting decision tree. \emph{Proceedings of the 31st International Conference on Neural Information Processing Systems}, 3149--3157 (Curran Associates Inc., Red Hook, NY, USA).

\bibitem[{Kipf \protect\BIBand{} Welling(2017)}]{kipf2017semisupervised}
Kipf TN, Welling M (2017) Semi-supervised classification with graph convolutional networks. \emph{International Conference on Learning Representations} (ICLR, Online).

\bibitem[{Kohavi(1994)}]{kohavi1994bottom}
Kohavi R (1994) Bottom-up induction of oblivious read-once decision graphs. \emph{Machine Learning: ECML-94}, 154--169 (Springer, Berlin, Heidelberg, Germany).

\bibitem[{Kohler et~al.(2010)Kohler, Krzy{\.z}ak, \protect\BIBand{} Todorovic}]{kohler2010pricing}
Kohler M, Krzy{\.z}ak A, Todorovic N (2010) Pricing of high-dimensional {American} options by neural networks. \emph{Mathematical Finance: An International Journal of Mathematics, Statistics and Financial Economics} 20(3):383--410.

\bibitem[{{Korea Exchange}(2025{\natexlab{a}})}]{krx_global_2025}
{Korea Exchange} (2025{\natexlab{a}}) {KRX Global Website}. Accessed March 27, \url{https://global.krx.co.kr/main/main.jsp}.

\bibitem[{{Korea Exchange}(2025{\natexlab{b}})}]{krx_data_2025}
{Korea Exchange} (2025{\natexlab{b}}) {KRX Market Data System}. Accessed January 12, \url{http://data.krx.co.kr/contents/MDC/MAIN/main/index.cmd}.

\bibitem[{Langenohl(2018)}]{langenohl2018sources}
Langenohl A (2018) Sources of financial synchronism: Arbitrage theory and the promise of risk-free profit. \emph{Finance and Society} 4(1):26--40.

\bibitem[{Lazzarino et~al.(2018)Lazzarino, Berrill, {\v{S}}evi{\'c} et~al.}]{lazzarino2018statistical}
Lazzarino M, Berrill J, {\v{S}}evi{\'c} A, et~al. (2018) What is statistical arbitrage? \emph{Theoretical Economics Letters} 8(05):888.

\bibitem[{Liu et~al.(2019)Liu, Cao, Ma, \protect\BIBand{} Shen}]{liu2019wavelet}
Liu X, Cao Y, Ma C, Shen L (2019) Wavelet-based option pricing: An empirical study. \emph{European Journal of Operational Research} 272(3):1132--1142.

\bibitem[{Liu \protect\BIBand{} Zhou(2022)}]{liu2022introduction}
Liu Z, Zhou J (2022) \emph{Introduction to graph neural networks} (Springer Nature, Cham, Switzerland).

\bibitem[{Lou \protect\BIBand{} Obukhov(2017)}]{lou2017bdt}
Lou Y, Obukhov M (2017) {BDT}: Gradient boosted decision tables for high accuracy and scoring efficiency. \emph{Proceedings of the 23rd ACM SIGKDD International Conference on Knowledge Discovery and Data Mining}, 1893--1901 (ACM, New York, NY, USA).

\bibitem[{Malliaris \protect\BIBand{} Salchenberger(1993)}]{malliaris1993neural}
Malliaris M, Salchenberger L (1993) A neural network model for estimating option prices. \emph{Applied Intelligence} 3:193--206.

\bibitem[{Malliaris \protect\BIBand{} Salchenberger(1996)}]{malliaris1996using}
Malliaris M, Salchenberger L (1996) Using neural networks to forecast the {S\&P} 100 implied volatility. \emph{Neurocomputing} 10(2):183--195.

\bibitem[{McElfresh et~al.(2023)McElfresh, Khandagale, Valverde, Prasad~C, Ramakrishnan, Goldblum, \protect\BIBand{} White}]{mcelfresh2023neural}
McElfresh D, Khandagale S, Valverde J, Prasad~C V, Ramakrishnan G, Goldblum M, White C (2023) When do neural nets outperform boosted trees on tabular data? \emph{Proceedings of the 37th International Conference on Neural Information Processing Systems}, 76336--76369 (Curran Associates Inc., Red Hook, NY, USA).

\bibitem[{Merton(1976)}]{merton1976option}
Merton RC (1976) Option pricing when underlying stock returns are discontinuous. \emph{Journal of Financial Economics} 3(1--2):125--144.

\bibitem[{Miao et~al.(2006)Miao, Gel, \protect\BIBand{} Gastwirth}]{miao2006new}
Miao W, Gel YR, Gastwirth JL (2006) A new test of symmetry about an unknown median. \emph{Random Walk, Sequential Analysis and Related Topics}, 199--214 (World Scientific, Hackensack, NJ, USA).

\bibitem[{Nau \protect\BIBand{} McCardle(1992)}]{nau1991arbitrage}
Nau RF, McCardle KF (1992) Arbitrage, rationality, and equilibrium. \emph{Decision Making Under Risk and Uncertainty: New Models and Empirical Findings}, 189--199 (Springer, Dordrecht, Netherlands).

\bibitem[{Osterrieder et~al.(2020)Osterrieder, Kucharczyk, Rudolf, \protect\BIBand{} Wittwer}]{osterrieder2020neural}
Osterrieder J, Kucharczyk D, Rudolf S, Wittwer D (2020) Neural networks and arbitrage in the {VIX}: A deep learning approach for the {VIX}. \emph{Digital Finance} 2(1):97--115.

\bibitem[{Park et~al.(2014)Park, Kim, \protect\BIBand{} Lee}]{park2014parametric}
Park H, Kim N, Lee J (2014) Parametric models and non-parametric machine learning models for predicting option prices: Empirical comparison study over {KOSPI} 200 index options. \emph{Expert Systems with Applications} 41(11):5227--5237.

\bibitem[{Peters et~al.(2019)Peters, Niculae, \protect\BIBand{} Martins}]{peters2019sparse}
Peters B, Niculae V, Martins AF (2019) Sparse sequence-to-sequence models. \emph{Proceedings of the 57th Annual Meeting of the Association for Computational Linguistics}, 1504--1519 (Association for Computational Linguistics, Stroudsburg, PA, USA).

\bibitem[{Popov et~al.(2020)Popov, Morozov, \protect\BIBand{} Babenko}]{popov2019neural}
Popov S, Morozov S, Babenko A (2020) Neural oblivious decision ensembles for deep learning on tabular data. \emph{International Conference on Learning Representations} (ICLR, Online).

\bibitem[{Prokhorenkova et~al.(2018)Prokhorenkova, Gusev, Vorobev, Dorogush, \protect\BIBand{} Gulin}]{prokhorenkova2018catboost}
Prokhorenkova L, Gusev G, Vorobev A, Dorogush AV, Gulin A (2018) {CatBoost}: Unbiased boosting with categorical features. \emph{Proceedings of the 32nd International Conference on Neural Information Processing Systems}, 6639--6649 (Curran Associates Inc., Red Hook, NY, USA).

\bibitem[{Ramp{\'a}{\v{s}}ek et~al.(2022)Ramp{\'a}{\v{s}}ek, Galkin, Dwivedi, Luu, Wolf, \protect\BIBand{} Beaini}]{rampavsek2022recipe}
Ramp{\'a}{\v{s}}ek L, Galkin M, Dwivedi VP, Luu AT, Wolf G, Beaini D (2022) Recipe for a general, powerful, scalable graph transformer. \emph{Proceedings of the 36th International Conference on Neural Information Processing Systems}, 14501--14515 (Curran Associates Inc., Red Hook, NY, USA).

\bibitem[{Rana et~al.(2023)Rana, Modi, Yadav, Singla et~al.}]{rana2023comparative}
Rana PS, Modi SK, Yadav AL, Singla S, et~al. (2023) Comparative analysis of tree-based models and deep learning architectures for tabular data: Performance disparities and underlying factors. \emph{International Conference on Advanced Computing \& Communication Technologies (ICACCTech)}, 224--231 (IEEE, Piscataway, NJ, USA).

\bibitem[{Rubinstein(1983)}]{rubinstein1983displaced}
Rubinstein M (1983) Displaced diffusion option pricing. \emph{The Journal of Finance} 38(1):213--217.

\bibitem[{Ruf \protect\BIBand{} Wang(2020)}]{ruf2019neural}
Ruf J, Wang W (2020) Neural networks for option pricing and hedging: A literature review. {Preprint}, submitted May 9, \url{http://arxiv.org/abs/1911.05620}.

\bibitem[{Schittenkopf \protect\BIBand{} Dorffner(2001)}]{schittenkopf2001risk}
Schittenkopf C, Dorffner G (2001) Risk-neutral density extraction from option prices: improved pricing with mixture density networks. \emph{IEEE Transactions on Neural Networks} 12(4):716--725.

\bibitem[{Shin \protect\BIBand{} Ryu(2012)}]{shin2012dynamic}
Shin HJ, Ryu J (2012) A dynamic hedging strategy for option transaction using artificial neural networks. \emph{International Journal of Software Engineering and its Applications} 6(4):111--116.

\bibitem[{Shwartz-Ziv \protect\BIBand{} Armon(2022)}]{shwartz2022tabular}
Shwartz-Ziv R, Armon A (2022) Tabular data: Deep learning is not all you need. \emph{Information Fusion} 81:84--90.

\bibitem[{Skiadas(2024)}]{skiadas2024theoretical}
Skiadas C (2024) \emph{Theoretical Foundations of Asset Pricing} (Cambridge University Press, Cambridge, England).

\bibitem[{Srivastava et~al.(2014)Srivastava, Hinton, Krizhevsky, Sutskever, \protect\BIBand{} Salakhutdinov}]{srivastava2014dropout}
Srivastava N, Hinton G, Krizhevsky A, Sutskever I, Salakhutdinov R (2014) Dropout: A simple way to prevent neural networks from overfitting. \emph{Journal of Machine Learning Research} 15(1):1929--1958.

\bibitem[{Ter~Horst et~al.(2001)Ter~Horst, Nijman, \protect\BIBand{} Verbeek}]{ter2001eliminating}
Ter~Horst JR, Nijman TE, Verbeek M (2001) Eliminating look-ahead bias in evaluating persistence in mutual fund performance. \emph{Journal of Empirical Finance} 8(4):345--373.

\bibitem[{Varian(1987)}]{varian1987arbitrage}
Varian HR (1987) The arbitrage principle in financial economics. \emph{Journal of Economic Perspectives} 1(2):55--72.

\bibitem[{Vasicek(1977)}]{vasicek1977equilibrium}
Vasicek O (1977) An equilibrium characterization of the term structure. \emph{Journal of Financial Economics} 5(2):177--188.

\bibitem[{Veličković et~al.(2018)Veličković, Cucurull, Casanova, Romero, Liò, \protect\BIBand{} Bengio}]{velickovic2018graph}
Veličković P, Cucurull G, Casanova A, Romero A, Liò P, Bengio Y (2018) Graph attention networks. \emph{International Conference on Learning Representations} (ICLR, Online).

\bibitem[{Wang et~al.(2017)Wang, Fu, Fu, \protect\BIBand{} Wang}]{wang2017deep}
Wang R, Fu B, Fu G, Wang M (2017) Deep \& cross network for ad click predictions. \emph{Proceedings of the ADKDD'17} (ACM, New York, NY, USA).

\bibitem[{Wang et~al.(2021)Wang, Shivanna, Cheng, Jain, Lin, Hong, \protect\BIBand{} Chi}]{wang2021dcn}
Wang R, Shivanna R, Cheng D, Jain S, Lin D, Hong L, Chi E (2021) {DCN} {V2}: Improved deep \& cross network and practical lessons for web-scale learning to rank systems. \emph{Proceedings of the Web Conference 2021}, 1785--1797 (ACM, New York, NY, USA).

\bibitem[{Wang \protect\BIBand{} Xu(2024)}]{wang2024deep}
Wang W, Xu J (2024) Deep learning option price movement. \emph{Risks} 12(6):93.

\bibitem[{Wang et~al.(2024)Wang, Zhao, Li, \protect\BIBand{} Wei}]{wang2024considering}
Wang Y, Zhao J, Li Q, Wei X (2024) Considering momentum spillover effects via graph neural network in option pricing. \emph{Journal of Futures Markets} 44(6):1069--1094.

\bibitem[{Wu et~al.(2021)Wu, Nasoz, Jung, Bhattarai, Han, Greenes, \protect\BIBand{} Saag}]{wu2021machine}
Wu Q, Nasoz F, Jung J, Bhattarai B, Han MV, Greenes RA, Saag KG (2021) Machine learning approaches for the prediction of bone mineral density by using genomic and phenotypic data of 5130 older men. \emph{Scientific Reports} 11(1):4482.

\bibitem[{Wu et~al.(2019)Wu, Pan, Chen, Long, Zhang, \protect\BIBand{} Yu}]{wu2020comprehensive}
Wu Z, Pan S, Chen F, Long G, Zhang C, Yu PS (2019) A comprehensive survey on graph neural networks. {Preprint}, submitted December 4, \url{https://arxiv.org/abs/1901.00596}.

\bibitem[{Yang et~al.(2017)Yang, Zheng, \protect\BIBand{} Hospedales}]{yang2017gated}
Yang Y, Zheng Y, Hospedales T (2017) Gated neural networks for option pricing: Rationality by design. \emph{Proceedings of the Thirty-First AAAI Conference on Artificial Intelligence}, 52--58 (AAAI Press, Washington, DC, USA).

\bibitem[{Y{\i}ld{\i}z \protect\BIBand{} Kalayci(2025)}]{yildiz2024gradient}
Y{\i}ld{\i}z AY, Kalayci A (2025) Gradient boosting decision trees on medical diagnosis over tabular data. {Preprint}, submitted January 19, \url{https://arxiv.org/abs/2410.03705}.

\bibitem[{Zapart(2003)}]{zapart2003statistical}
Zapart C (2003) Statistical arbitrage trading with wavelets and artificial neural networks. \emph{IEEE International Conference on Computational Intelligence for Financial Engineering}, 429--435 (IEEE, Piscataway, NJ, USA).

\bibitem[{Zhan et~al.(2022)Zhan, Zhang, Du, \protect\BIBand{} Yang}]{zhan2022exploring}
Zhan B, Zhang S, Du HS, Yang X (2022) Exploring statistical arbitrage opportunities using machine learning strategy. \emph{Computational Economics} 60(3):861--882.

\bibitem[{Zhang \protect\BIBand{} Huang(2021)}]{zhang2021option}
Zhang J, Huang W (2021) Option hedging using {LSTM-RNN}: an empirical analysis. \emph{Quantitative Finance} 21(10):1753--1772.

\bibitem[{Zhao et~al.(2022)Zhao, Xu et~al.}]{zhao2022deep}
Zhao Y, Xu S, et~al. (2022) Deep learning meets statistical arbitrage: An application of long short-term memory networks to algorithmic trading. \emph{Journal of Financial Data Science} 4(4):133--150.

\bibitem[{Zheng et~al.(2021)Zheng, Yang, \protect\BIBand{} Chen}]{zheng2021incorporating}
Zheng Y, Yang Y, Chen B (2021) Incorporating prior financial domain knowledge into neural networks for implied volatility surface prediction. \emph{Proceedings of the 27th ACM SIGKDD Conference on Knowledge Discovery \& Data Mining}, 3968--3975 (ACM, New York, NY, USA).

\end{thebibliography}





  



\end{document}